\newtheorem{Theorem}{Theorem}
\newtheorem{Theorem*}{Theorem}
\newtheorem{Claim*}[Theorem]{Claim}
\newtheorem{CounterExample*}{$\overline{\hbox{\bf Example}}$}
\newtheorem{Example*}[Theorem]{Example}
\newtheorem{Intuition*}[Theorem]{Intuition}
\newtheorem{Joke*}[Theorem]{Joke}
\newtheorem{Lemma}[Theorem]{Lemma}
\newtheorem{Lemma*}[Theorem]{Lemma}
\newtheorem{Open problem}[Theorem]{Open problem}
\newtheorem{Question*}[Theorem]{Question}
\def \bSubexa    {\begin{subexa}}
\newcommand{\ignore}[1]{}
\newcommand{\II}{\mathbb{I}} 
\newcommand{\EE}{\mathbb{E}}
\newcommand{\NN}{\mathbb{N}}
\newcommand{\RR}{\mathbb{R}}
\newcommand{\naturals}{\NN}
\newcommand{\reals}{\RR}
\def \cE     {{\cal E}}
\def \cO     {{\cal O}}
\def \cP     {{\cal P}}
\def \cX     {{\cal X}}
\newcommand{\poi}{{\rm poi}}
\newcommand{\iid}{\textit{i.i.d.}} 
\newcommand{\mrgs}[2]{\# #1 \#{\marginpar{#2}}}
\newcommand{\mrgchk}[1]{\mrgs{#1}{CHECK}}
\def \upto  {{,}\ldots{,}}
\def \sets#1{{\{#1\}}}
\def \Sets#1{{\left\{#1\right\}}}
\def \set#1#2{{\sets{{#1}\upto{#2}}}}
\newcommand{\ed}{\stackrel{\mathrm{def}}{=}}
\def\ignore#1{}
\newcommand{\bi}{\begin{itemize}}
\newcommand{\ei}{\end{itemize}}
\def\orpro{\mathop{\mathchoice
   {\vee\kern-.49em\raise.7ex\hbox{$\cdot$}\kern.4em}
   {\vee\kern-.45em\raise.63ex\hbox{$\cdot$}\kern.2em}
   {\vee\kern-.4em\raise.3ex\hbox{$\cdot$}\kern.1em}
   {\vee\kern-.35em\raise2.2ex\hbox{$\cdot$}\kern.1em}}\limits}
\def\andpro{\mathop{\mathchoice
 {\wedge\kern-.46em\lower.69ex\hbox{$\cdot$}\kern.3em}
 {\wedge\kern-.46em\lower.58ex\hbox{$\cdot$}\kern.25em}
 {\wedge\kern-.38em\lower.5ex\hbox{$\cdot$}\kern.1em}
 {\wedge\kern-.3em\lower.5ex\hbox{$\cdot$}\kern.1em}}\limits}
\def\simge{\mathrel{%
   \rlap{\raise 0.511ex \hbox{$>$}}{\lower 0.511ex \hbox{$\sim$}}}}
\def\simle{\mathrel{
   \rlap{\raise 0.511ex \hbox{$<$}}{\lower 0.511ex \hbox{$\sim$}}}}
\newcommand{\Dk}{\Delta_k}
\newcommand{\Dpsi}{\Delta_\psi}
\newcommand{\Mk}{M_k}
\newcommand{\Rw}{\hat R}
\newcommand{\RwMkn}{\hat R(\Mk^n)}
\newcommand{\zipf}[2]{\text{zipf}(#1,#2)}
\newcommand{\uenv}{\cE_{(f)}}
\newcommand{\uialphaen}{\cE_{(ci^{-\alpha},k)}}
\newcommand{\oenv}{\cE_{f}}
\newcommand{\udist}{\cP_{(p)}}
\newcommand{\uialpha}{\cP_{(\text{zipf}(\alpha,k))}}
\newcommand{\Ckalpha}{C_{k,\alpha}}
\newcommand{\pd}{\cP_d}
\newcommand{\ptau}{\tau^k}
\newcommand{\nz}{\Dpoin}
\newcommand{\dpoi}{d^{poi(n)}}
\newcommand{\Dpoin}{\varphi_+^{poi(n)}}
\newcommand{\Dn}{\varphi_+^n}
\newcommand\numberthis{\addtocounter{equation}{1}\tag{\theequation}}
\title{Universal Compression of Power-Law Distributions}
\author{
\begin{tabular}[t]{c@{\extracolsep{5em}}c@{\extracolsep{5em}}c} 
Moein Falahatgar  & Ashkan Jafarpour & Alon Orlitsky \\
\small\texttt{moein@ucsd.edu} & \small\texttt{ashkan@ucsd.edu} & \small\texttt{alon@ucsd.edu}
\end{tabular}
\vspace{2ex} \\
\begin{tabular}[t]{c@{\extracolsep{8em}}c} 
Venkatadheeraj Pichapati  & Ananda Theertha Suresh\\
\small\texttt{dheerajpv7@gmail.com} & \small\texttt{asuresh@ucsd.edu} 
\end{tabular}
\vspace{2ex}\\
University of California, San Diego
}
\begin{document}

\maketitle

\begin{abstract}
English words and the outputs of many other natural processes
are well-known to follow a Zipf distribution.
Yet this thoroughly-established property has never been shown to
help compress or predict these important processes.
We show that the expected redundancy of Zipf distributions of
order $\alpha>1$ is roughly the $1/\alpha$ power of the
expected redundancy of unrestricted distributions. Hence for these
orders, Zipf distributions can be better compressed and predicted
than was previously known.
Unlike the expected case, we show that worst-case
redundancy is roughly the same for Zipf and for unrestricted
distributions. Hence Zipf distributions have significantly
different worst-case and expected redundancies, making them
the first natural distribution class shown to have such a difference.
\end{abstract}

\textbf{Keywords: Power-law, Zipf, Universal Compression, Distinct elements, Redundancy} 

\section{Introduction}
\subsection{Definitions}
The fundamental
data-compression theorem states that every discrete
distribution $p$ 
can be compressed to its entropy
$H(p)\ed\sum p(x)\log\frac1{p(x)}$, a compression
rate approachable by assigning each symbol $x$ a codeword of roughly
$\log\frac1{p(x)}$ bits.

In reality, the underlying distribution is seldom known.
For example, in text compression, we observe only the words,
no one tells us their probabilities. In all these cases, it is
not clear how to compress the distributions to their entropy.

The common approach to these cases is \emph{universal
compression}. It assumes that while the underlying distribution is unknown,
it belongs to a known class of possible distributions,
for example, \iid\ or Markov distributions. Its goal is to derive
an encoding that works well for all distributions in the class.

To move towards formalizing this notion, observe that every
compression scheme for a distribution over a discrete set $\cX$
corresponds to some distribution $q$ over $\cX$
where each symbol $x\in\cX$ is assigned a codeword of length
$\log\frac{1}{q(x)}$. Hence the expected number of
bits used to encode the distribution's output is 
$\sum p(x)\log\frac{1}{q(x)}$, and the additional number of bits
over the entropy minimum is $\sum p(x)\log\frac{p(x)}{q(x)}$.

Let $\cP$ be a collection of distributions over $\cX$.
The collection's \emph{expected redundancy}, is the least worst-case
increase in the expected number of bits over the entropy, where the 
worst case is taken over all distributions in $\cP$ and the least 
is minimized over all possible encoders, 
\[
\bar{R}(\cP)
\ed
\min_q\max_{p\in\cP}\sum_{x\in\cX} p(x) \log \frac{p(x)}{q(x)}.
\]

An even stricter measure of the increased encoding length due to not
knowing the distribution is the collection's \emph{worst-case redundancy}
that considers the worst increase not just over all
distributions, but also over all possible outcomes $x$,
\[
\hat{R}(\cP)
\ed
\min_q\max_{p \in\cP} \max_{x \in \mathcal{X}}   \log \frac{p(x)}{q(x)}.
\]
Clearly,
\[
\bar{R}(\mathcal{P})
\le
\hat{R}(\mathcal{P}).
\]
Interestingly, until now, except for some made-up examples,
all analyzed collections had extremely close expected and worst-case
redundancies. One of our contributions is to demonstrate a practical
collection where these redundancies vastly differ, hence achieving 
different optimization goals may require different encoding
schemes.


By far the most widely studied are the collections of \iid\ distributions. For every distribution $p$, the \iid\ 
distribution $p^n$ assigns to a length-$n$ string $x^n\ed (x_1,x_2\upto
x_n)$ probability $p(x^n)=p(x_1)\cdot\ldots\cdot p(x_n)$.
For any collection $\cP$ of distributions, the length-$n$
\iid\ collection is
\[
\cP^n
\ed
\Sets{p^n:p\in\cP}.
\]

\subsection{Previous results}
Let $\Dk$ denote the collection of all distribution over $\set1k$, 
where $\Delta$ was chosen to represent the simplex.
For the first few decades of universal compression,
researchers studied the redundancy of $\Dk^n$ when the alphabet size $k$ is fixed and the block length $n$ tends to infinity.
A sequence of papers~\cite{KT81,Kie78,Dav73,DMPW81,WST95,XB00,SW10,OS03:soi,Ris96,Cov91,Szp98,szpankowski2012minimax} showed that 
\[
\hat{R}(\Dk^n) = \frac{k-1}{2} \log \frac{n}{2 \pi} +\log \frac{\Gamma(\frac{1}{2})^k}{\Gamma(\frac{k}{2})}+ o_k(1),
\]
and that the expected redundancy is extremely close, at most
$\log e$ bits lower.
Note that a similar result holds for the complementary regime where
$n$ is fixed and $k$ tends to infinity,
\[
\hat{R}(\Dk^n) = n \log \frac{k}{n} + o(n).
\]
These positive results show that redundancy grows only
logarithmically with the sequence length $n$, therefore for long
sequences, the per-symbol redundancy diminishes to zero and the 
underlying distribution needs not to be known to approach entropy.
As is also well known, expected redundancy is exactly the same
as the log loss of sequential prediction, hence these results
also show that prediction can be performed with very small log loss.

However, as intuition suggests, and these equations confirm, redundancy
increases sharply with the alphabet size $k$.
In many, and possibly most, important real-life applications, 
the alphabet size is very large, often even larger than
the block length.
This is the case for example in applications involving natural
language processing, population estimation, and genetics~\cite{CG96}.
The redundancy in these cases is therefore very large, and can be
even unbounded for any sequence length $n$.

Over the last decade, researchers therefore considered methods
that could cope with compression and prediction of distributions
over large alphabets. Two main approaches were taken.

\cite{OSZ03} separated compression (and similarly prediction) of
large-alphabet sequences into compression of their 
\emph{pattern} that indicates the order at which symbols appeared,
and \emph{dictionary} that maps the order to the symbols.
For example, the pattern of $``banana"$ is $123232$ and its dictionary
is $1\to b$, $2\to a$, and $3\to n$. Letting $\Delta_\psi^n$
denote the collection of all pattern distributions, induced on sequences of length $n$
by all \iid\ distributions over any alphabet, a sequence of
papers~\cite{OSZ03,Sha06,Sha04,G09,OS03:soi,ADO12,acharya2013tight} showed that although patterns carry essentially all the entropy,
they can be compressed with redundancy
\[
0.3 \cdot n^{1/3}
\le
\bar R(\Dpsi^n)
\le
\hat R(\Dpsi^n)
\le
n^{1/3}\cdot\log^4 n
\] 
as $n \to \infty$. Namely, pattern redundancy too is sublinear in the block length and
most significantly, is uniformly upper bounded regardless of the alphabet
 size (which can be even infinite). It follows the per-symbol
pattern redundancy and prediction loss both diminish to zero at a
uniformly-bounded rate, regardless of the alphabet size.
Note also, that for pattern redundancy, worst-case and expected
redundancy are quite close.

However, while for many prediction applications predicting
the pattern suffices, for compression one typically needs to know
the dictionary as well. These results show that essentially all
the redundancy lies in the dictionary compression.

The second approach restricted the class of distributions
compressed. A series of works studied class of \emph{monotone} distributions~\cite{Sha13,Acharya14mono}. Recently, \cite{Acharya14mono} showed that the class $\Mk$ of
monotone distributions over $\set1k$ has redundancy
$\RwMkn \leq \sqrt{20n\log k \log n}$.

More closely related to this paper are \emph{envelope classes}.
An \emph{envelope} is a function $f:\naturals_+\to\reals_{\ge0}$.
For envelope function $f$, 
\[
\cE_f
\ed
\sets{p:p_i\le f(i)\text{ for all }i\ge1}
\]
is the collection of distributions where each $p_i$ is at
most the corresponding envelope bound $f(i)$.
Some canonical examples are the power-law envelopes 
$f(i)=c\cdot i^{-\alpha}$, and the exponential envelopes 
$f(i)=c\cdot e^{-\alpha\cdot i}$.
In particular, for power-law envelopes \cite{BGG09,BMesrob14} showed
\[
\Rw(\cE_f) \leq \left( \frac{2cn}{\alpha-1}\right)^{\frac{1}{\alpha}}(\log n)^{1-\frac{1}{\alpha}}+\cO(1),
\]
and more recently, \cite{acharya2014universal} showed that
\[
\Rw(\cE_f)= \Theta(n^{1/\alpha}).
\]

The restricted-distribution approach has the advantage that
it considers the complete
sequence redundancy, not just the pattern.
Yet it has the shortcoming that it may not capture 
relevant distribution collections.
For example, most real distributions are not monotone, 
words starting with `a' are not necessarily more likely
than those starting with `b'. 
Similarly for say power-law envelopes, why should words
in the early alphabet have higher upper bound than 
subsequent ones? Thus, words do not carry frequency order inherently. 

\subsection{Distribution model}
In this paper we combine the advantages and avoid the shortfalls
of both approaches to compress and predict distributions
over large alphabets.
As in patterns, we consider useful distribution collections, and
like restricted-distributions, we address the full redundancy.

Envelope distributions are very appealing as they effectively
represent our belief about the distribution. However their main drawback
is that they assume that the correspondence between the probabilities 
and symbols is known, namely that $p_i\le f(i)$ for the same $i$.
We relax this requirement and assume only that an upper envelope
on the sorted distribution, not the individual elements, is known.
Such assumptions on the sorted distributions are believed to hold
for a wide range of common distributions.

In 1935, linguist George Kingsley Zipf observed that when English words
are sorted according to their probabilities, namely so that
$p_1\ge p_2\ge\ldots$, the resulting distribution
follows a power law, $p_i\sim\frac{c}{i^{\alpha}}$
for some constant $c$ and power $\alpha$. \ignore{(see Figure~\ref{fig:word}).}
Long before Zipf,~\cite{pareto1896} studied distributions in income ranking and showed it can be mathematically expressed as power-law. Since then, researchers have found a very large number of distributions
such as word frequency, population ranks of cities, corporation sizes, and website users that when sorted follow this
\emph{Zipf-}, or \emph{power}-law ~\cite{zipf'32, zipf'49, adamic2002zipf}. 
In fact, a Google Scholar search for ``power-law distribution''
returns around 50,000 citations. 
\ignore{
\begin{figure}[!t]
\includegraphics[scale=0.6]{wordGraph6}
\caption{Plot of log of the word frequency vs log of the rank}
\label{fig:word}
\end{figure}}
\ignore{
\begin{figure}
        \centering
        \begin{subfigure}[b]{0.2\textwidth}
                \caption{number of website users fitted by a power-law distribution with $\alpha=2.07$}.~\cite{}
                \label{fig:web_users}
        \end{subfigure}%
        ~ 
        \begin{subfigure}[b]{0.23\textwidth}
                \includegraphics[width=\textwidth]{wordGraph6.eps}
                \caption{word frequency vs rank of the words follows Zipf's law}
                \label{fig:word_freq}
        \end{subfigure}
        ~ 
        \caption{Examples of Zipf's law}\label{examples}
\end{figure}}

A natural question therefore is whether the established and commonly
trusted empirical observation that real distributions obey Zipf's law
can be used to better predict or equivalently compress them,
and if so, by how much.
\ignore{
Note that power-law and related distribution collections are a form 
of restricted-distributions. Take one, or several, fixed
distributions, and consider all their permutations.}
\ignore{
\mrgchk{Here you may want to introduce the notation and name you use
for the class of permutations of a distribution $p$ or permutations
of a collection $\cP$, for example the following. Maybe $\pi$ instead
of $\sigma$? Let me know if you want me to write it more fully.}
For any envelope $f$, let
\[
\cE_{\sigma(f)}
\ed 
\bigcup_{\sigma'} \cE_{\sigma'(f)}
\]
where $\sigma'$ ranges over all permutations of $\naturals_+$,
be the collection of distributions that are bounded by some
permutation of $f$.
Letting $\vec p$ and $\vec f$ denote the sorted 
version of $p$ and $f$, it is easy to see that
\mrgchk{Please verify}
\[
\cE_{\sigma(f)}
=
\Sets{p:\vec p\in\cE_{\vec f}},
\]
namely the set of distributions $p$ that when sorted are upper bounded
by the sorted version of $f$.
}

In Section~\ref{sec:prelim} we state our notation followed by new results in Section~\ref{sec:results}. Next, in Section~\ref{sec:worst} we bound the worst-case redundancy for power-law envelop class. In Section~\ref{sec:expected_distinct} we take a novel approach to analyze the expected redundancy. We introduce a new class of distributions which has the property that all permutations of a distribution are present in the class. Then we upper and lower bound the expected redundancy of this class based on the expected number of distinct elements. Finally, in Section~\ref{sec:zipf} we show that the redundancy of power-law envelop class can be studied in this framework.

\section{Preliminaries}
\label{sec:prelim}
\subsection{Notation}
Let $x^n\ed(x_1,x_2,..,x_n)$ denote a sequence of length $n$, 
 $\mathcal{X}$ be the underlying alphabet and $k \ed |\cX|$.  
 The \emph{multiplicity} $\mu_x$ of a symbol $x \in \cX$ is the number of times $x$ appears in $x^n$. Let $[k]=\{1,2,...,k\}$ be the indices of elements in $\cX$. The type vector of $x^n$ over $[k]=\{1,2,...,k\}$, $\tau(x^n)=(\mu_1,\mu_2,\ldots,\mu_k)$ is a $k$-tuple of multiplicities in $x^n$. The \emph{prevalence} of a multiplicity $\mu$, denoted by $\varphi_{\mu}$, is the number of elements appearing $\mu$ times in $x^n$. For example, $\varphi_1$ denotes the number of elements which appeared once in $x^n$. Furthermore, $\varphi_+$ denotes the number of distinct elements in $x^n$. The vector of prevalences for all $\mu$'s is called the profile vector.

We use $p_{(i)}$ to note the $i^{th}$ highest probability in $p$. Hence,
$p_{(1)} \geq p_{(2)} \geq \ldots p_{(k)}$.
Moreover, we use $\zipf{\alpha}{k}$ to denote Zipf distribution with parameter $\alpha$ and support $k$. Hence,
\[
\zipf{\alpha}{k}_i = \frac{i^{-\alpha}}{\Ckalpha},
\]
where $\Ckalpha$ is the normalization factor. Note that all logarithms in this paper are in base $2$ and we consider only the case $\alpha>1$.
\ignore{The profile $\varphi$ of a sequence is the multi-set of multiplicities of all symbols appearing.}

\subsection{Problem statement}

For an envelope $f$ with support size $k$, let $\uenv$ be the class of distributions such that
\[
\uenv = \{p :  \ p_{(i)} \leq f(i)\ \forall 1 \leq i \leq k \}.
\]
Note that $\oenv \subset \uenv$. 
\ignore{Our goal is to find the redundancy of envelope classes as a function of $f$ and use it for the power-law distributions.}
We also consider the special case when $f$ is a distribution itself, in which case we denote $\uenv$ by  $\udist$, a class that has distributions
whose multi-set of probabilities is same as $p$. In other words, $\udist$ contains all permutations of distribution $p$. Also we define 
\begin{align*}
\pd^n=\{p^n:  \EE_p[\Dn] \le d \},
\end{align*} 
where $\Dn$ is the number of distinct elements in $x^n$. Note that for any distribution belonging to this class, all permutations of it are also in the class. 

\section{Results}
\label{sec:results}
We first consider worst-case redundancy,
lower-bound it for general unordered permutations,
and apply the result to unordered power-law classes,
showing that for 
$n \leq k^{1/\alpha}$,
\[
\hat{R}(\uialphaen^n) \geq \hat{R}(\uialpha^n) \geq n \log \frac{k-n}{n^{\alpha} \Ckalpha}.
\]
This shows that the worst-case redundancy of power-law
distributions behaves roughly as that of general distributions over the same alphabet.

More interestingly, we establish a general method for upper- and lower-bounding
the expected redundancy of unordered envelope distributions
in terms of expected number of distinct symbols. Precisely, for a class $\pd^n$ we show the following upper bound
\[
\bar{R}(\pd^n) \leq d \log \frac{kn}{d^2} + (2\log e+1)d + \log (n+1).
\]
\textbf{Interpretation}:
This upper bound can be also written as 
\begin{equation}
\label{eq:upperbound}
 \log n + \log {k \choose d} + \log {n-1 \choose d-1}. 
\end{equation}
This suggests a very clear intuition of the upper bound. We can give a compression scheme for any sequence that we observe. Upon observing a sequence $x^n$, first we declare how many distinct elements are in that sequence. For this we need $\log n$ bits. In addition to those bits, we need $ \log {k \choose d}$ bits to specify which $d$ distinct elements out of $k$ elements appeared in the sequence. Finally, for the exact number of occurrences of each distinct element we should use $\log {n-1 \choose d-1}$ bits.

We also show a lower bound which is dependent on both the expected number of distinct elements $d$ and the distributions in the class $\pd^n$. Namely, we show

\begin{align*}
\bar{R}(\pd^n) &\geq \left(\log {k\choose d}- d \log \frac{n}{d}-d \log \pi e\right)(1+o_d(1)) -\sum_{np_i < 0.7} (3np_i -np_i \log np_i) .
\end{align*}

Using this result, we then consider expected redundancy of power-law distributions as a special case of $\pd^n$
and show that it is significantly lower than that of general
distributions. This shows that on average, Zipf distributions
can be compressed much better than general ones.
Since expected redundancy is the same as log loss, 
they can also be predicted more effectively. In fact we show that for $k>n$,
\[
 \bar{R}(\uialphaen^n)= \Theta (n^{\frac{1}{\alpha}} \log k ).
\]

Recall that general length-$n$ \iid\ distributions over 
alphabet of size $k$ have redundancy roughly $n\log\frac kn$ bits.
Hence, when $k$ is not much larger than $n$, the expected redundancy
of Zipf distributions of order $\alpha>1$ is the $1/\alpha$ power
of the expected redundancy of general distributions. 
For example, for $\alpha=2$ and $k=n$, the redundancy of Zipf
distributions is $\Theta(\sqrt n\log n)$ compared to $n$ for
general distributions.
This reduction from linear to sub-linear dependence on $n$ also 
implies that unordered power-law envelopes are universally
compressible when $k=n$.

These results also show that worst-case redundancy is roughly the
same for Zipf and general distributions.
Comparing the results for worst-case and expected redundancy of 
Zipf distributions, it also follows
that for those distributions expected- and worst-case redundancy
differ greatly. This is the first natural class of distribution 
for which worst-case and expected redundancy have been
shown to significantly diverge.

As stated in the introduction, for the power-law envelope $f$, \cite{acharya2014universal} showed that
\[
\Rw(\cE_f)= \Theta(n^{1/\alpha}).
\]

Comparing this with the results in this paper reveals that if we know the envelop on the class of distributions but we do not know the true order of that, we have an extra multiplicative factor of $\log k$ in the expected redundancy, i.e. 
\[
\bar{R}(\cE_{(f)})= \Theta(n^{1/\alpha} \log k).
\]
\section{Worst-case redundancy}
\label{sec:worst}
\subsection{Shtarkov Sum}
It is well known that the worst-case redundancy can be calculated using Shtarkov sum~\cite{shtar1987universal}, i.e. for any class $\cP$
\begin{equation}
\label{redundancy-shtarkov}
\hat{R}(\mathcal{P})=\log S(\mathcal{P}),
\end {equation}
 where $S(\mathcal{P})$ is the Shtarkov sum and defined as
 \begin{equation}
\label{shtarkov}
S(\mathcal{P}) \ed \sum_{x \in \mathcal{X}} \hat{p}(x).
\end {equation}
For notational convenience we denote $\hat{p}(x) \ed \max_{p \in \cP}p(x)$,
to be the maximum probability any distribution in $\cP$ assigns to $x$.

\subsection{Small alphabet case}
Recall that $\hat{R}(\Dk^n) \approx \frac{k-1}{2} \log n$. 
We now give a simple example to show that unordered distribution classes $\cP_{(p)}$ may have much smaller redundancy. In particular we show that for a distribution $p$ over $k$ symbols, 
\[
\hat{R}(\udist^n) \leq  \log k! \leq k \log k \ \ \ \forall n.
\]
 Consider the Shtarkov sum
\begin{align*}
S(\udist^n) 
&= \sum_{x^n \in \cX^n} \hat{p}(x^n)\\
& \leq \sum_{x^n \in \cX^n} \sum_{p \in \udist} p(x^n) \\
& = \sum_{p \in\udist} \sum_{x^n \in \cX^n} p(x^n) \\
& = \sum_{p \in \udist} 1= |\udist| = k!.
\end{align*}
Clearly for $n \gg k$, the above bound is smaller than $\hat{R}(\Dk^n)$. 

\subsection{Large alphabet regime}

From the above result, it is clear that as $n \to \infty$, 
the knowledge of the underlying-distribution multi-set helps in universal compression. A natural question is to ask if the same applies for the large alphabet regime when the number of samples $n \ll k$.
Recall that \cite{acharya2014universal, BGG09} showed that for power-law envelopes, $f(i)=c\cdot i^{-\alpha}$, with infinite support size
\[
\hat{R}(\oenv) = \Theta(n^{\frac{1}{\alpha}}).
\]
We show that if the permutation of the distribution is not known
then the worst-case redundancy is $\Omega(n) \gg\Theta(n^{\frac{1}{\alpha}})$,
and thus the knowledge of the permutation
is essential.
In particular, we prove that
even for the case when the envelope class consists of only one power-law distribution, $\hat{R}$ scales as $n$.
\begin{Theorem}
\label{thm:worst}
For 
$n \leq k^{1/\alpha}$,
\[
\hat{R}(\uialphaen^n) \geq \hat{R}(\uialpha^n) \geq n \log \frac{k-n}{n^{\alpha} \Ckalpha}.
\]
\end{Theorem}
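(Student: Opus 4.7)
The first inequality $\hat R(\uialphaen^n) \ge \hat R(\uialpha^n)$ reduces to the set inclusion $\uialpha \subseteq \uialphaen$, which is immediate because every permutation of $\zipf{\alpha}{k}$ has sorted-probability sequence $p_{(i)} = i^{-\alpha}/\Ckalpha$, bounded above by the envelope $c \cdot i^{-\alpha}$ (under the natural normalization $c \ge 1/\Ckalpha$). For the more substantive second inequality, the plan is to apply the Shtarkov identity $\hat R(\uialpha^n) = \log S(\uialpha^n)$ and lower-bound the Shtarkov sum by restricting the summation over $x^n \in \cX^n$ to those sequences whose $n$ coordinates are pairwise distinct.

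The key observation is that for any such distinct-coordinate sequence, $p(x^n)$ is a product of $n$ distinct entries of $p$. Since every $p \in \uialpha$ is a permutation of $q \ed \zipf{\alpha}{k}$ with $q_i = i^{-\alpha}/\Ckalpha$, a rearrangement argument shows that the maximizing permutation matches the $n$ symbols used in $x^n$ to the $n$ largest Zipf probabilities $q_1, \ldots, q_n$. This gives, uniformly over all distinct-coordinate sequences,
\[
\hat p(x^n) \;=\; \prod_{i=1}^{n} q_i \;=\; \frac{1}{(n!)^{\alpha}\, \Ckalpha^{\,n}}.
\]

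Counting distinct-coordinate sequences yields $k(k-1)\cdots(k-n+1) \ge (k-n)^n$ terms, so I would combine to obtain
\[
S(\uialpha^n) \;\ge\; \frac{(k-n)^n}{(n!)^{\alpha}\, \Ckalpha^{\,n}},
\]
and taking $\log_2$ together with the routine estimate $\log n! \le n \log n$ produces $\hat R(\uialpha^n) \ge n \log \frac{k-n}{n^{\alpha}\Ckalpha}$. The hypothesis $n \le k^{1/\alpha}$ enters only to make the bound non-vacuous: since $\Ckalpha \le \zeta(\alpha) < \infty$ when $\alpha > 1$, under this hypothesis the denominator $n^{\alpha}\Ckalpha$ is of order at most $k$ while $k-n$ is of order $k$, so the ratio safely exceeds $1$.

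The only step requiring more than bookkeeping is the rearrangement claim, and even that will be straightforward: a single swap argument shows that if an optimizing permutation $\sigma$ maps $\{x_1,\ldots,x_n\}$ to a subset of $[k]$ that omits some $i \le n$ while containing some $j > n$, transposing the images $i$ and $j$ strictly increases $\prod_\ell q_{\sigma(x_\ell)}$ because $q_i > q_j$, contradicting optimality. Thus the main (mild) obstacle is just confirming this swap step; every remaining ingredient is elementary combinatorics and logarithmic estimation.
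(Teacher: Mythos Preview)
Your proof is correct and follows essentially the same route as the paper: restrict the Shtarkov sum to sequences with all coordinates distinct, lower-bound $\hat p(x^n)$ on each such sequence by the product $\prod_{i\le n} q_i$ of the $n$ largest Zipf probabilities, count the sequences by the falling factorial $k(k-1)\cdots(k-n+1)\ge (k-n)^n$, and simplify using $n!\le n^n$. The only cosmetic difference is that the paper merely exhibits one permutation attaining $\prod_{i\le n} q_i$ (which already suffices for a lower bound on $\hat p$), whereas you additionally argue via a swap that this permutation is in fact optimal---an unnecessary but harmless refinement.
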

\begin{proof}
Since $\uialpha^n \subset \uialphaen^n$, we have 
\[
\hat{R}(\uialphaen^n) \geq \hat{R}(\uialpha^n).
\]
To lower bound $ \hat{R}(\uialpha^n)$,
recall that 
\begin{align*}
S(\uialpha^n) 
&= \sum_{x^n} \hat{p} (x^n) \\
& \geq \sum_{x^n : \Dn = n} \hat{p}(x^n),
\end{align*}
where $\Dn$ is the number of distinct symbols in $x^n$.
Note that number of such sequences is $k(k-1)(k-2)\ldots(k-n+1)$.
We lower bound $\hat{p}(x^n)$ for every such sequence.
Consider the distribution $q \in \uialpha$
given by $q(x_i) = \frac{1}{i^{\alpha} \Ckalpha}\ \forall 1 \leq i \leq n$.
Clearly $\hat{p}(x^n) \geq q(x^n)$ and as a result we have
\begin{align*}
S(\uialpha^n) 
& \geq k(k-1)(k-2)\ldots (k-n+1) \prod^n_{i=1} \frac{1}{i^{\alpha} \Ckalpha} \\
& \geq \left(\frac{k-n}{n^{\alpha}\Ckalpha}\right)^n.
\end{align*}
Taking the logarithm yields the result.
\end{proof}
Thus for small values of $n$, independent of the underlying distribution per-symbol redundancy is $\log \frac{k}{n^{\alpha}}$.  
Since for $n \leq k$, $\hat{R}(\Dk^n) \approx n \log \frac{k}{n}$, we have 
for $n \leq k^{1/\alpha}$
\[
\hat{R}(\uialphaen^n) \leq \hat{R}(\Dk^n) \leq \cO ( n \log \frac{k}{n}).
\]
 Therefore, together with Theorem~\ref{thm:worst}, we have for $n \leq k^{1/\alpha}$ 
\[
 \Omega( n \log \frac{k}{n^{\alpha}})\le \hat{R}(\uialphaen^n) \le \cO ( n \log \frac{k}{n}).\]

\section{Expected redundancy based on the number of distinct elements}
\label{sec:expected_distinct}

In order to find the redundancy of the unordered envelop classes, we follow a more systematic approach and define another structure on the underlying class of distributions. More precisely, we consider the class of all distributions in which we have an upper bound on the expected number of distinct elements we are going to observe. Lets define 
\begin{align*}
\pd^n=\{p^n:  \EE_p[\Dn] \le d \},
\end{align*}
where $\Dn$ is the number of distinct symbols in the sequence $x^n$. Note that for any distribution belonging to this class, all permutations of it are also in the class. We later show that envelop classes can be described in this way and the expected number of distinct elements characterizes the envelop classes; therefore we can bound the redundancy of them applying results in this section. 
\subsection{Upper bound}
The following lemma bounds the expected redundancy of a class in terms of $d$.
\begin{Lemma}
\label{lem:distinctupper}
For any class $\pd^n$, 
\[
\bar{R}(\pd^n) \leq d \log \frac{kn}{d^2} + (2\log e+1)d + \log (n+1).
\]
\end{Lemma}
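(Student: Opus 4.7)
The plan is to construct an explicit coding distribution $q$ on $\cX^n$ that realizes the three-stage encoding suggested by the interpretation in (\ref{eq:upperbound}): first encode the number of distinct symbols $D=D(x^n)$, then the identity of the $D$-subset of $[k]$ that appears, then the composition of $n$ into $D$ positive multiplicities, and finally the ordering of the sequence. Concretely I would put
\[
q(x^n)=\frac{1}{n+1}\cdot\frac{1}{\binom{k}{D(x^n)}}\cdot\frac{1}{\binom{n-1}{D(x^n)-1}}\cdot\frac{\prod_i\mu_i(x^n)!}{n!}.
\]
Grouping sequences by their type shows $\sum_{x^n}q(x^n)=\tfrac{n}{n+1}\le 1$, so $q$ is essentially a distribution (at worst off by a $\log(1+1/n)$ renormalization, which is absorbed trivially).

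Next I would exploit the fact that for i.i.d.\ $p^n$, the value of $p^n(x^n)$ depends on $x^n$ only through the type $\tau(x^n)$. Collecting sequences of equal type in the sum $\sum_{x^n}p^n(x^n)\log(p^n(x^n)/q(x^n))$ cancels the $\prod_i\mu_i!/n!$ factor and reduces the redundancy to a KL divergence at the level of types between $P_p$ (the distribution of $\tau(x^n)$ induced by $p$) and $q_\tau(\tau)=\tfrac{1}{n+1}\binom{k}{D(\tau)}^{-1}\binom{n-1}{D(\tau)-1}^{-1}$. Upper bounding the KL divergence by the cross-entropy $\EE_p[\log 1/q_\tau(\tau)]$ then produces
\[
\bar{R}(\pd^n)\le \log(n+1)+\EE_p\!\left[\log\binom{k}{D}\right]+\EE_p\!\left[\log\binom{n-1}{D-1}\right].
\]

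To land on the closed form in the lemma, I would note that $\log\binom{k}{D}$ and $\log\binom{n-1}{D-1}$ are both concave in $D$ on the relevant ranges (either from the trigamma formula, or from the fact that the successive ratios $(k-D)/(D+1)$ and $(n-D)/D$ are monotone decreasing). Jensen's inequality combined with $\EE_p[D]\le d$ and the monotonicity of $\binom{k}{\cdot}$, $\binom{n-1}{\cdot-1}$ on the regime $d\le k/2$, $d\le n/2$ then yields $\EE_p[\log\binom{k}{D}]\le\log\binom{k}{d}$ and similarly for the second expectation, recovering (\ref{eq:upperbound}). Finally the standard estimate $\binom{m}{j}\le(me/j)^j$ gives $\log\binom{k}{d}+\log\binom{n-1}{d-1}\le d\log(kn/d^2)+2d\log e$, and a constant-size additive slack in, e.g., the step $\binom{n-1}{d-1}\le\binom{n}{d}\le(ne/d)^d$ absorbs into the $+d$ correction to yield $(2\log e+1)d$ as stated.

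The main obstacle is not any single computation but the bookkeeping around the Jensen-plus-monotonicity step: one needs to verify uniformly over $p\in\pd^n$ that the relevant expectation $\EE_p[D]$ sits in the range where the concave upper-bounding function is also monotone in $d$, and to handle the degenerate regime $d\gtrsim\sqrt{kn}$ where the claimed bound is weaker than the trivial estimate $\bar R(\pd^n)\le \log|\{\text{types}\}|$. Once these routine checks are in place, the three-stage encoding above immediately delivers the lemma.
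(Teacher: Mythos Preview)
Your proposal is correct and follows essentially the same three-stage coding idea as the paper, but two details differ and are worth noting.

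First, your coding distribution uses the uniform-over-type-class factor $\prod_i\mu_i!/n!$, whereas the paper uses the maximum-likelihood factor $\prod_i(\mu_i/n)^{\mu_i}$. Both choices are valid sub-distributions with the same normalizer $N_{D}=n\binom{k}{D}\binom{n-1}{D-1}$ (the paper uses $1/n$ rather than $1/(n+1)$, but this only sharpens $\log(n+1)$ to $\log n$). Your route through the type-level KL divergence and the cross-entropy bound is clean; the paper instead uses the one-line pointwise inequality $p(x^n)\le\prod_i(\mu_i/n)^{\mu_i}$ to get $\log\frac{p(x^n)}{q(x^n)}\le\log N_{D}$ directly. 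Either way one lands on $\EE_p[\log N_{D}]$.

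Second, and more substantively, the paper applies the estimate $\binom{m}{j}\le(me/j)^j$ \emph{before} Jensen, not after. This matters precisely because of the monotonicity issue you flag as ``the main obstacle'': the function $D\mapsto D\log\frac{2kn}{D^2}+(2\log e)D$ obtained from the binomial bound is concave \emph{and} increasing on the entire range $0\le D\le\min(k,n)\le\sqrt{kn}$ (its derivative is $\log(kn/D^2)+1>0$ there), so Jensen followed by $\EE_p[D]\le d$ goes through with no side conditions on $d$. By contrast, your order (Jensen on $\log\binom{k}{D}$ first) requires $d\le k/2$ and $d\le n/2$ for the monotonicity step, forcing the separate ``degenerate regime'' analysis you mention. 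Swapping the order, as the paper does, eliminates that obstacle entirely.
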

\begin{proof}
We give an explicit coding scheme that achieves the above redundancy. For a sequence $x^n$ with multiplicities of symbols $\mu^k \ed \mu_1,\mu_2,\ldots, \mu_k$, 
let 
\[
q(x^n) = \frac{1}{N_{\Dn}} \cdot \prod^k_{j=1} \left( \frac{\mu_j}{n}\right)^{\mu_j}
\]
be the probability our compression scheme assigns to $x^n$
and $N_{\Dn}$ is the normalization factor given by
\[
N_{\Dn} =  n \cdot {k \choose \Dn} \cdot {{n-1\choose {\Dn-1}}} .
\]
Before proceeding, we show that $q$ is a valid coding scheme by showing that $\sum_{x^n \in \cX^n} q(x^n) \leq 1$. 
We divide the set of sequences as follows.
\[
\sum_{x^n \in \cX^n}  = \sum^n_{d'=1} \ \  \sum_{S\in \cX : |S| = d'} \ \ \ 
\sum_{\mu^k : \mu_i=0 \text{ iff } i \notin S}\ \ 
\sum_{x^n: 
\mu(x^n) = \mu^k} 
\]
Now we can re-write and bound $\sum_{x^n \in \cX^n} q(x^n)$ as the following.
\begin{align*}
\sum^n_{d'=1} \ \  \sum_{S\in \cX : |S| = d'} \ \ \ 
\sum_{\mu^k : \mu_i=0 \text{ iff } i \notin S}\ \ 
\sum_{x^n: 
\mu(x^n) = \mu^k} q(x^n) &\stackrel{(a)}{\le}\sum^n_{d'=1} \ \  \sum_{S\in \cX : |S| = d'} \ \ \ 
\sum_{\mu^k : \mu_i=0 \text{ iff } i \notin S} \frac{1}{N_{d'}}\\
& \stackrel{(b)}{=} \sum^n_{d'=1} \frac{{k \choose d'} \cdot {{n-1\choose {d'-1}}}}{N_{d'}}\\
& = \sum^n_{d'=1} \frac{1}{n} = 1.
\end{align*}
where $(a)$ holds since for a given $\mu^k$ , the maximum likelihood distribution for all sequences with same values of $\mu_1,\mu_2,\ldots \mu_k$ are same. Also $(b)$ follows from the fact that  the second summation ranges over ${k \choose d'}$ values and the third summation ranges over ${n-1 \choose {d'-1}}$ values.
Furthermore for any $p^n \in \pd^n$,
\begin{align*}
\log \frac{p(x^n)}{q(x^n)} 
& \leq \log N_{\Dn} + n \cdot \sum^k_{i=1} \frac{\mu_i}{n} \log \frac{p_i}{\mu_i/n} \leq \log N_{\Dn}.
\end{align*}
Taking expectation over both sides
\begin{align*}
& \bar{R}(\uenv) \le \EE[\log N_{\Dn}] \\
& \leq \log n + \EE\left[\log {k \choose \Dn} + \log {n-1 \choose \Dn-1}\right] \\
& \stackrel{(a)}{\leq} \log n + \EE \left[\Dn \log \left( \frac{k}{\Dn} \cdot \frac{2n}{\Dn}\right) +(2 \log e)\Dn \right] \\
& \stackrel{(b)}{\leq} \log n + d \log \frac{kn}{d^2} + (2\log e+1)d,
\end{align*}
where $(a)$ follows from the fact that ${n \choose d} \leq \left(\frac{ n e}{d} \right)^d$ and $(b)$ follows from Jensen's inequality.
\end{proof}
\ignore{
\textbf{Interpretation}:
The upper bound given in Theorem~\ref{lem:distinctupper} can be also written as 
\begin{equation}
\label{eq:upperbound}
 \log n + \log {k \choose d} + \log {n-1 \choose d-1}. 
\end{equation}
This suggests a very clear intuition of the upper bound. We can give a compression scheme for any sequence that we observe. Upon observing a sequence $x^n$, first we declare how many distinct elements are in that sequence. For this we need $\log n$ bits. In addition to those bits, we need $ \log {k \choose d}$ bits to specify which $d$ distinct elements out of $k$ elements appeared in the sequence. Finally, for the exact number of occurrences of each distinct element we should use $\log {n-1 \choose d-1}$ bits, based on balls and bins model. Next we show that the leading term in~\eqref{eq:upperbound} is also the necessary number of extra bits to compress the sequence $x^n$. In other words, the upper bound and the lower bound on the redundancy are matched in leading terms.}

\subsection{Lower bound}
To show a lower bound on the expected redundancy of class $\pd^n$, we use some helpful results introduced in previous works. First, we introduce Poisson sampling and relate the expected redundancy in two cases when we use normal sampling and Poisson sampling. Then we prove the equivalence of expected redundancy of the sequences and expected redundancy of types. 

\textbf{Poisson sampling}: In the standard sampling method, where a distribution is sampled $n$ times, 
the multiplicities are dependent, for example they add up to $n$.
Hence, calculating redundancy under this sampling often requires 
various concentration inequalities, complicating the proofs. A useful approach to make them independent and hence simplify the analysis is to sample the distribution $n'$ times, where $n'$ is a Poisson random variable with mean $n$.
Often called as Poisson sampling, this approach has been used in universal compression to simplify the analysis~\cite{ADO12,acharya2013tight,yang2013large,acharya2014universal}. 

Under Poisson sampling, if a distribution $p$ is sampled $\iid$ $\poi(n)$ times, then the number of times symbol $x$ appears is an independent Poisson random variable with mean $np_x$,
namely, $\Pr(\mu_x = \mu) = \frac{e^{-np_x} (np_x)^{\mu}}{\mu!}$~\cite{MU05}. Henceforth, to distinguish between two cases of normal sampling and Poisson sampling we specify it with superscripts $n$ for normal sampling and $poi(n)$ for Poisson sampling. 

Next lemma lower bounds $\bar{R}(\cP^n)$ by the redundancy in the presence of Poisson sampling. We use this lemma further in our lower-bound arguments. 
\begin{Lemma}
\label{lem:lower_poisson}
For any class $\cP$,
\[
\bar{R}(\cP^n) \geq \frac{1}{2}\bar{R}(\cP^{\poi(n)}).
\]
\end{Lemma}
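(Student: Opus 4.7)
The plan is to reduce the Poisson-sampled redundancy to the fixed-length redundancy via two elementary properties of $\bar R$ in the sequence length: \emph{monotonicity} and \emph{sub-additivity}. I will first show that for all positive integers $m$ and $n$,
\[
\bar R(\cP^m) \leq \lceil m/n \rceil \cdot \bar R(\cP^n),
\]
then average this bound over the $\poi(n)$ law of the number of samples.

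Monotonicity, $\bar R(\cP^m) \leq \bar R(\cP^{m+1})$, follows from the data-processing inequality: marginalizing any length-$(m+1)$ coder over its last coordinate yields a length-$m$ coder with no larger worst-case KL divergence. Sub-additivity, $\bar R(\cP^{m+m'}) \leq \bar R(\cP^m) + \bar R(\cP^{m'})$, comes from using the product of the individually optimal length-$m$ and length-$m'$ coders: the KL of a product distribution factorizes and $\sup_p(a+b)\le \sup_p a + \sup_p b$. Iterating sub-additivity $\lceil m/n \rceil$ times and appealing to monotonicity gives the displayed inequality.

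Next I will prove the intermediate bound
\[
\bar R(\cP^{\poi(n)}) \leq \sum_{m \geq 0} \Pr(N = m)\, \bar R(\cP^m),\qquad N \sim \poi(n).
\]
Any coder for Poisson sampling is a distribution $q$ on $\bigcup_m \{m\} \times \cX^m$, which I decompose as $q(m,x^m) = r(m)\, q_m(x^m)$. Choosing $r$ equal to the true $\poi(n)$ law annihilates the length-coding term in $D(p^{\poi(n)} \| q)$, leaving $\sum_m \Pr(N=m)\, D(p^m \| q_m)$. Taking each $q_m$ to be the optimal length-$m$ coder and invoking $\sup_p \sum_m(\cdot) \le \sum_m \sup_p(\cdot)$ gives the bound. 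Combining with the previous step,
\[
\bar R(\cP^{\poi(n)}) \leq \bar R(\cP^n)\cdot \EE\!\left[\lceil N/n \rceil\right] \leq \bar R(\cP^n)\cdot\left(\tfrac{\EE N}{n} + 1\right) = 2\,\bar R(\cP^n),
\]
which rearranges to the claimed lemma.

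The main step is the intermediate bound on $\bar R(\cP^{\poi(n)})$: one must correctly decompose the Poisson-sampling coder into a length distribution and length-conditional distributions, and then carefully exchange supremum with summation. The monotonicity and sub-additivity arguments are otherwise routine. As a side remark, one could alternatively invoke the redundancy-capacity theorem together with the concavity of $m \mapsto I_\pi(P; X^m)$, which follows from the fact that the increments $I(P; X_{i+1} \mid X^i)$ are non-increasing in $i$ under i.i.d.\ exchangeability; Jensen's inequality would then yield the strictly stronger conclusion $\bar R(\cP^{\poi(n)}) \leq \bar R(\cP^n)$. The sub-additivity route above is more elementary and already suffices for the stated factor of $1/2$.
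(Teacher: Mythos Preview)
Your proof is correct and follows essentially the same approach as the paper: construct the Poisson coder by pairing the true $\poi(n)$ length law with the optimal fixed-length coders $q_{n'}$, bound $\bar R(\cP^{\poi(n)})$ by the Poisson-weighted average of $\bar R(\cP^{n'})$, and then invoke monotonicity and sub-additivity (which the paper cites from~\cite{ADO12} rather than proving) to obtain $\bar R(\cP^{n'})\le (n'/n+1)\bar R(\cP^n)$ and hence the factor~$2$. The only differences are that you supply short proofs of monotonicity and sub-additivity, and you add the side remark about concavity of $m\mapsto I_\pi(P;X^m)$, which is not in the paper.
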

\begin{proof}
By the definition of $\bar{R}(\cP^{\poi(n)})$,
\begin{equation}
\label{eq:poi_ave}
\bar{R}(\cP^{\poi(n)}) = \min_{q} \max_{p \in \cP} \EE_{\poi(n)} \left[\log \frac{p_{\poi(n)}(x^{n'})}{q(x^{n'})} \right],
\end{equation}
where subscript $\poi(n)$ indicates that the probabilities are calculated under Poisson sampling. Similarly, for every $n'$,
\[
\bar{R}(\cP^{n'}) = \min_{q} \max_{p \in \cP} \EE\left[\log \frac{p(x^{n'})}{q(x^{n'})} \right].
\]
Let $q_{n'}$ denote the distribution that achieves the above minimum. We upper bound the right hand side of Equation~\eqref{eq:poi_ave} by constructing an explicit $q$. Let 
\[
q(x^{n'}) = e^{-n}\frac{n^{n'}}{n'!} q_{n'}(x^{n'}).
\]
Clearly $q$ is a distribution as it adds up to $1$. Furthermore, since $p_{\poi(n)}(x^{n'}) = e^{-n}\frac{n^{n'}}{n'!} p(x^{n'})$,
we get
\begin{align*}
\bar{R}(\cP^{\poi(n)})
& \leq  \max_{p \in \cP} \EE_{\poi(n)} \left[\log \frac{p_{\poi(n)}(x^{n'})}{q(x^{n'})} \right] \\
&  =  \max_{p \in \cP} \sum^{\infty}_{n'=0} 
e^{-n}\frac{n^{n'}}{n'!}
 \EE \left[\log \frac{e^{-n}\frac{n^{n'}}{n'!}  p(x^{n'})}{e^{-n}\frac{n^{n'}}{n'!}  q_{n'}(x^{n'})} \right] \\
& \leq \sum^{\infty}_{n'=0} 
e^{-n}\frac{n^{n'}}{n'!} \max_{p \in \cP} \EE \left[ \log \frac{p(x^{n'})}{ q_{n'}(x^{n'})} \right] \\
& = \sum^{\infty}_{n'=0} e^{-n}\frac{n^{n'}}{n'!}  \bar{R}(\cP^{n'}),
\end{align*}
where the last equality follows from definition of $q_{n'}$.
By monotonicity and sub-additivity of $\bar{R}(\cP^{n'})$ 
(see Lemma $5$ in~\cite{ADO12}), it follows that
\begin{align*}
  \bar{R}(\cP^{n'}) 
&\leq \bar{R}(\cP^{n\lceil \frac{n'}{n} \rceil}) \\
& \leq \left\lceil \frac{ n'}{n} \right\rceil \bar{R}(\cP^{n}) \\
& \leq  \left(\frac{n'}{n}+1\right) \bar{R}(\cP^{n}).
\end{align*}
Substituting the above bound we get
\begin{align*}
\bar{R}(\cP^{\poi(n)})
& \leq \sum^{\infty}_{n'=0} e^{-n}\frac{n^{n'}}{n'!}  \left(\frac{ n'}{n}+1\right) \bar{R}(\cP^{n}) \\
& = 2 \bar{R}(\cP^{n}),
\end{align*}
where the last equality follows from the fact that expectation of $n'$ is $n$.
\end{proof}

\textbf{Type redundancy}: In the following lemma we show that the redundancy of the sequence is same as the redundancy of the type vector. Therefore we can focus on compressing the type of the sequence and calculate the expected redundancy of that. 
\begin{Lemma}
\label{lem:typeredundancy}
Lets define $\tau(\cP^n)=\{\tau(p^n): p\in \cP\}$, then we have
\[
\bar{R}(\tau(\cP^n)) = \bar{R}(\cP^n).
\]
\end{Lemma}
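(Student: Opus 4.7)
The plan is to prove the two inequalities $\bar{R}(\cP^n)\le\bar{R}(\tau(\cP^n))$ and $\bar{R}(\cP^n)\ge\bar{R}(\tau(\cP^n))$ separately, both exploiting the fact that i.i.d.\ distributions are permutation-invariant, so $p(x^n)$ depends on $x^n$ only through its type. Let $N(\tau)$ denote the number of sequences of type $\tau$; then for every $x^n$ with $\tau(x^n)=\tau$ we have $p(x^n)=p(\tau)/N(\tau)$, where $p(\tau)\ed\sum_{y^n:\tau(y^n)=\tau}p(y^n)$. This identity is the single structural fact driving the whole argument.

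For the direction $\bar{R}(\cP^n)\le\bar{R}(\tau(\cP^n))$, I would take a near-optimal encoder $q'$ for types and lift it to sequences by setting $q(x^n)\ed q'(\tau(x^n))/N(\tau(x^n))$. This is a valid distribution since $\sum_{x^n}q(x^n)=\sum_\tau N(\tau)\cdot q'(\tau)/N(\tau)=1$. A direct calculation then gives
\[
\sum_{x^n}p(x^n)\log\frac{p(x^n)}{q(x^n)}
=\sum_\tau N(\tau)\cdot\frac{p(\tau)}{N(\tau)}\log\frac{p(\tau)/N(\tau)}{q'(\tau)/N(\tau)}
=\sum_\tau p(\tau)\log\frac{p(\tau)}{q'(\tau)},
\]
so the lift matches the type-redundancy exactly, and taking the max over $p$ preserves this.

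For the reverse inequality $\bar{R}(\cP^n)\ge\bar{R}(\tau(\cP^n))$, I would take a near-optimal encoder $q$ for sequences and project it onto types by $q'(\tau)\ed\sum_{x^n:\tau(x^n)=\tau}q(x^n)$. The KL divergence between the induced distributions on types is at most that between the original distributions on sequences by the data-processing inequality, which here is just the log-sum inequality applied within each type class:
\[
\sum_{x^n:\tau(x^n)=\tau}p(x^n)\log\frac{p(x^n)}{q(x^n)}
\ge p(\tau)\log\frac{p(\tau)}{q'(\tau)}.
\]
Summing over $\tau$ and maximizing over $p$ yields the desired bound. Combining both directions gives equality.

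I do not foresee a serious obstacle: the argument is essentially a symmetry plus data-processing observation, and the only thing to be careful about is verifying that the lift $q$ is well-defined and normalized (which it is, because types partition $\cX^n$). The only subtle point is that the log-sum inequality is tight precisely when $p(x^n)/q(x^n)$ is constant within each type class, but we do not need tightness for either direction; the lift construction achieves equality on its own, and the projection construction gives the opposite inequality, which together pin down the value.
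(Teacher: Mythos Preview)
Your proof is correct and is essentially the same argument as the paper's, just more carefully unpacked. The paper compresses both directions into a single equality step, justified by ``convexity of KL-divergence'' (which is your log-sum/data-processing $\ge$ direction) together with ``all sequences of a specific type have the same probability'' (which is what makes your lift construction give exact equality and hence the $\le$ direction); you have simply made the two inequalities explicit via the lift and project maps.
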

\begin{proof}
\begin{align*}
\bar{R}(\cP^n)&= \min_q \max_{p \in \cP} \EE\left[\log \frac{p(x^n)}{q(x^n)}\right]\\
=&\min_q \max_{p \in \cP} \sum_{x^n \in \cX^n}p(x^n)\log \frac{p(x^n)}{q(x^n)}\\
=&\min_q \max_{p \in \cP} \sum_{\tau} \sum_{x^n \in \cX^n: \tau(x^n)=\tau}p(x^n)\log \frac{p(x^n)}{q(x^n)}\\
\stackrel{(a)}{=}& \min_q \max_{p \in \cP} \sum_{\tau} \left(\sum_{x^n: \tau(x^n)=\tau} p(x^n)\right) \log \frac{\sum_{x^n: \tau(x^n)=\tau} p(x^n)}{\sum_{x^n: \tau(x^n)=\tau} q(x^n)}\\
=& \min_q \max_{p \in \cP} \sum_{\tau} p(\tau)\log \frac{p(\tau)}{q(\tau)}\\
=& \bar{R}(\tau(\cP^n))
\end{align*}
where $(a)$ is by convexity of KL-divergence and the fact that all sequences of a specific type have the same probability. 
\end{proof}

Now we reach to the main part of this section, i.e. lower bounding the expected redundancy of class $\pd^n$. Based on the previous lemmas, we have 
\[
\bar{R}(\pd^n) \ge \frac12 \bar{R}(\pd^{poi(n)})=\frac12\bar{R}(\tau(\pd^{poi(n)}))
\]
and therefore it is enough to show a lower bound on $\bar{R}(\tau(\pd^{poi(n)}))$. We decompose $\bar{R}(\tau(\pd^{poi(n)}))$ as
\begin{align*}
\bar{R}(\tau(\pd^{poi(n)}))&=\min_q \max_{\ptau \in \tau(\pd^{poi(n)})} \sum_{\tau^k} p(\tau^k) \log \frac{p(\tau^k)}{q(\tau^k)}\\
&=\min_q \max_{\ptau \in \tau(\pd^{poi(n)})} \sum_{\tau^k} p(\tau^k) \log \frac{1}{q(\tau^k)}\\
&-\sum_{\tau^k} p(\tau^k) \log \frac{1}{p(\tau^k)}\
\end{align*}
Hence it suffices to show a lower bound on $\sum_{\tau^k} p(\tau^k) \log \frac{1}{q(\tau^k)}$ and an upper bound on $\sum_{\tau^k} p(\tau^k) \log \frac{1}{p(\tau^k)}$. For the first term, we upper bound $q(\tau^k)$ based on the number of distinct elements in sequence $x^{poi(n)}$. Lemmas~\ref{lem:dpoi1},~\ref{lem:dpoi2},~\ref{lem:boundond} prove this upper bound. Afterwards we consider the second term and it turns out that this term is nothing but the entropy of the type vectors under Poisson sampling. 

The following two concentration lemmas from ~\cite{GnedinHP2007,BenBO2014} help us to relate the expected number of distinct elements for normal and Poisson sampling. We continue by a lemma making connection between those two quantities. Denote the number of distinct elements in $x^{poi(n)}$ as $\Dpoin$, and $\dpoi=\EE[\Dpoin]$. Similarly, $\Dn$ is the number of distinct elements in $x^n$ and $d=\EE[\Dn]$.

\begin{Lemma}
\label{lem:dpoi1}  
(\cite{BenBO2014}) Let $v=\EE[\varphi_1^{poi(n)}]$ be the expected number of elements which appeared once in $x^{poi(n)}$, then 
\[
 \Pr[\Dpoin<\dpoi-\sqrt{2vs}] \le e^{-s}.
 \]
\end{Lemma}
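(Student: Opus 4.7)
The plan is to pass to independent summands via Poissonization, bound the variance of $\Dpoin$ by $v$, and then invoke a sub-Gaussian lower-tail concentration inequality with that variance proxy.

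First, under $\poi(n)$ sampling the multiplicities $\mu_1,\ldots,\mu_k$ are mutually independent with $\mu_i\sim\poi(np_i)$. Hence the occupancy indicators $Z_i\ed\mathbbm{1}[\mu_i\ge 1]$ are independent Bernoullis of mean $1-e^{-np_i}$, and $\Dpoin=\sum_i Z_i$ is a sum of $k$ independent $[0,1]$-valued random variables.

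Second, I would bound $\Var(\Dpoin)$ in terms of $v$. A direct computation gives $\Var(\Dpoin)=\sum_i(1-e^{-np_i})\,e^{-np_i}$; applying the elementary inequality $1-e^{-x}\le x$ for $x\ge 0$ and recognizing that $np_i e^{-np_i}=\Pr(\mu_i=1)$ yields
\[
\Var(\Dpoin)\ \le\ \sum_i np_i\, e^{-np_i}\ =\ \EE[\varphi_1^{poi(n)}]\ =\ v.
\]

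Third, I would apply a sub-Gaussian lower-tail inequality of the form $\Pr[\Dpoin\le\dpoi-t]\le\exp(-t^2/(2v))$ and substitute $t=\sqrt{2vs}$. The natural route is the modified log-Sobolev / entropy method of Boucheron-Lugosi-Massart, applied to the ``decrement'' representation of $f=\Dpoin$: if $f^{(i)-}$ is obtained from $f$ by replacing $\mu_i$ with $(\mu_i-1)_+$, then $f-f^{(i)-}=\mathbbm{1}[\mu_i=1]$, so $\sum_i(f-f^{(i)-})=\varphi_1^{poi(n)}$, which has expectation $v$. This is precisely the self-bounded structure under which the entropy method yields sub-Gaussian concentration with variance proxy $v$ rather than $\EE[f]$.

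The main obstacle is this last step. A naive Chernoff calculation on $\sum_i Z_i$ using $1-p_i(1-e^{-\lambda})\le \exp(-p_i(1-e^{-\lambda}))$ produces only $\exp(-t^2/(2d))$, because it collapses the variance into $\EE[\Dpoin]=d$; a Bennett/Bernstein bound with the true variance $V=\sum_i p_i(1-p_i)\le v$ still carries an additive linear term in the denominator and, after substituting $t=\sqrt{2vs}$, falls short of the clean $e^{-s}$ tail. Getting a pure sub-Gaussian constant $v$ --- equivalently, exploiting that only symbols appearing exactly once can change the distinct-count under a decrement --- is the technical core, and is where the Ben-Hamou-Boucheron-Ohannessian entropy-method machinery does genuine work beyond the standard Laplace-transform calculation.
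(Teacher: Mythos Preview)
The paper does not prove this lemma at all: it is stated with an explicit citation to \cite{BenBO2014} and used as a black box. There is nothing in the paper to compare your argument against.

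That said, your sketch is an accurate outline of how the cited result is actually obtained. The three ingredients you name --- independence of the $\mu_i$ under Poissonization, the variance bound $\Var(\Dpoin)=\sum_i(1-e^{-np_i})e^{-np_i}\le\sum_i np_ie^{-np_i}=v$, and the self-bounding/decrement structure $f-f^{(i)-}=\mathbbm{1}[\mu_i=1]$ feeding the entropy method --- are exactly the ones used in Ben-Hamou--Boucheron--Ohannessian. You are also right that a plain Chernoff or Bernstein bound on the Bernoulli sum $\sum_i Z_i$ only delivers $\exp(-t^2/(2\dpoi))$ or a Bennett-type bound with a linear correction, and does not give the clean $\exp(-t^2/(2v))$ tail; the improvement from $\dpoi$ to $v$ genuinely requires the modified log-Sobolev route. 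So your proposal is not so much a gap as an honest acknowledgment that the last step is the cited theorem itself, which is precisely how the present paper treats it.
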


\begin{Lemma}
\label{lem:dpoi2}
(Lemma 1 in~\cite{GnedinHP2007}) Let $\EE[\varphi_2^{poi(n)}]$ be the expected number of elements which appeared twice in $x^{poi(n)}$, then 
\[
|\dpoi-d|<2 \frac{\EE[\varphi_2^{poi(n)}]}{n}.
\]
\end{Lemma}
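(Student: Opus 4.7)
The plan is to express both $d$ and $\dpoi$ as per-symbol sums via linearity of expectation, reduce $|\dpoi - d|$ to a pointwise comparison between $e^{-np_i}$ and $(1-p_i)^n$, and then recognize the resulting bound as $2\,\EE[\varphi_2^{poi(n)}]/n$ through the Poisson point mass at~$2$.

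First I would write
\[
d = \sum_i \bigl(1 - (1-p_i)^n\bigr), \qquad \dpoi = \sum_i \bigl(1 - e^{-np_i}\bigr),
\]
where the sum ranges over the support of $p$. Since $(1-p)^n \le e^{-np}$ on $[0,1]$, we immediately obtain $\dpoi \le d$ and
\[
|\dpoi - d| \;=\; \sum_i \bigl(e^{-np_i} - (1-p_i)^n\bigr).
\]

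The crux of the argument is the per-symbol inequality $e^{-np} - (1-p)^n \le np^2 e^{-np}$ for every $p \in [0,1]$ and $n \ge 1$. Setting $h(p) \ed (1-p)^n e^{np}$, this is equivalent to $1 - h(p) \le np^2$. Differentiating gives $h'(p) = -np(1-p)^{n-1}e^{np}$, so
\[
1 - h(p) \;=\; \int_0^p nt\,(1-t)^{n-1}e^{nt}\,dt \;\le\; \int_0^p nt\,e^t\,dt \;=\; n\bigl[(p-1)e^p + 1\bigr],
\]
where the inequality uses $(1-t)^{n-1} \le e^{-(n-1)t}$. An elementary single-variable check---the function $p^2 - (p-1)e^p - 1$ vanishes at both $p=0$ and $p=1$, and its derivative $p(2 - e^p)$ has a single sign change on $[0,1]$ at $p = \ln 2$---then shows $(p-1)e^p + 1 \le p^2$ on $[0,1]$, completing the per-symbol bound.

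Finally, summing over $i$ and using $\Pr(\mu_i = 2) = (np_i)^2 e^{-np_i}/2$ under Poisson sampling, so that $\EE[\varphi_2^{poi(n)}] = \sum_i (np_i)^2 e^{-np_i}/2$, one obtains
\[
|\dpoi - d| \;\le\; \sum_i np_i^2 e^{-np_i} \;=\; \frac{2}{n}\sum_i \frac{(np_i)^2}{2}e^{-np_i} \;=\; \frac{2\,\EE[\varphi_2^{poi(n)}]}{n}.
\]
I expect the main obstacle to be the per-symbol inequality: the intermediate bound $n[(p-1)e^p + 1]$ is tight at both endpoints of $[0,1]$, so there is essentially no slack, and the factor of $2$ on the right-hand side is forced by this route---any coarser estimate (e.g., replacing $e^t$ by $e$ on $[0,1]$) blows up the constant.
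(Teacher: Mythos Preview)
Your proof is correct. The paper does not actually prove this lemma; it simply quotes it as Lemma~1 of \cite{GnedinHP2007}, so there is no in-paper proof to compare against. Your argument---expanding $d$ and $\dpoi$ as sums of $1-(1-p_i)^n$ and $1-e^{-np_i}$, bounding the per-symbol gap $e^{-np_i}-(1-p_i)^n$ by $np_i^2e^{-np_i}$, and identifying the resulting sum as $2\,\EE[\varphi_2^{poi(n)}]/n$---is exactly the route taken in the cited source, and your verification of the per-symbol inequality via $g(p)=p^2-(p-1)e^p-1$ is clean and complete. One trivial caveat: your chain yields $\le$ rather than the strict $<$ in the statement (equality occurs only in degenerate cases such as $n=1$ with a point mass), but this has no bearing on how the lemma is used downstream.
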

Using Lemmas~\ref{lem:dpoi1} and~\ref{lem:dpoi2} we lower and upper bound the number of non-zero elements in $\tau(x^{poi(n)})$.
\begin{Lemma}
\label{lem:boundond}
The number of non-zero elements in $\tau(x^{poi(n)})$ is more than $(1-\epsilon)d$ with probability $>1-e^{-\frac{d(\epsilon-2/n)^2}{2}}$. Also, the number of non-zero elements in $\tau(x^{poi(n)}) < (1+\epsilon)d$ with probability $>1-e^{-\frac{d(\epsilon-2/n)^2}{2}}$.
\end{Lemma}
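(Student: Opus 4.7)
\bigskip

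\textbf{Proof plan for Lemma~\ref{lem:boundond}.}
The quantity ``number of non-zero elements in $\tau(x^{poi(n)})$'' is exactly $\Dpoin$, the number of distinct symbols observed. My first step is to pin $\dpoi=\EE[\Dpoin]$ close to $d=\EE[\Dn]$ using the already-stated Lemma~\ref{lem:dpoi2}: since $\EE[\varphi_2^{poi(n)}]\le\dpoi$ (each element counted in $\varphi_2$ is distinct, so $\varphi_2^{poi(n)}\le\Dpoin$), rearranging $|\dpoi-d|\le 2\EE[\varphi_2^{poi(n)}]/n$ yields the sandwich
\[
d\Paren{1-\tfrac{2}{n}}\ \le\ \dpoi\ \le\ d\Paren{1+\tfrac{2}{n}}.
\]
I would also record that $v=\EE[\varphi_1^{poi(n)}]\le \dpoi\le d(1+2/n)$ for use below; in the writeup I would justify the slightly cleaner $v\le d$ via the symbol-wise inequality $np_i e^{-np_i}\le 1-(1-p_i)^n$.

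For the lower tail I plug this into Lemma~\ref{lem:dpoi1}. Choose the deviation $s=d(\epsilon-2/n)^2/2$, so that
\[
\sqrt{2vs}\ \le\ \sqrt{2ds}\ =\ d\Paren{\epsilon-\tfrac{2}{n}}.
\]
Combining with $\dpoi\ge d(1-2/n)$ gives $\dpoi-\sqrt{2vs}\ge d(1-2/n)-d(\epsilon-2/n)=(1-\epsilon)d$. Lemma~\ref{lem:dpoi1} then yields
\[
\Pr\brack{\Dpoin<(1-\epsilon)d}\ \le\ \Pr\brack{\Dpoin<\dpoi-\sqrt{2vs}}\ \le\ e^{-d(\epsilon-2/n)^2/2},
\]
which is the claimed lower-tail estimate.

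For the upper tail Lemma~\ref{lem:dpoi1} is not directly applicable, but under Poisson sampling the symbol multiplicities $\mu_i$ are independent, so $\Dpoin=\sum_i\II\{\mu_i>0\}$ is a sum of independent Bernoulli indicators with mean $\dpoi$. Applying a standard Chernoff/Bennett inequality for such sums gives $\Pr[\Dpoin>\dpoi+t]\le e^{-t^2/(2\dpoi)}$ (or the Poisson-tail analogue of Lemma~\ref{lem:dpoi1} with $v$ replaced by $\dpoi$). Setting $t$ so that $\dpoi+t\le(1+\epsilon)d$, which by $\dpoi\le d(1+2/n)$ is satisfied once $t\le d(\epsilon-2/n)$, and using $\dpoi\le d(1+2/n)$ in the denominator, collapses the exponent to $-d(\epsilon-2/n)^2/2$ up to the same bookkeeping as above, yielding the second bound.

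The main obstacle is bookkeeping rather than conceptual: each of the three ``close-to-$d$'' surrogates ($v\le d$, $\EE[\varphi_2^{poi(n)}]\le d$, and $\dpoi\in d[1-2/n,1+2/n]$) must be tight enough for the constant $1/2$ in the exponent to survive. The lower-tail argument is immediate from the already-stated lemmas; the subtlety is arranging the upper-tail Chernoff step so that one does not lose a small factor that would degrade $(\epsilon-2/n)^2/2$ into something looser.
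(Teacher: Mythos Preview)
Your proposal is correct and follows essentially the same route as the paper: combine Lemma~\ref{lem:dpoi1} with the sandwich coming from Lemma~\ref{lem:dpoi2}. Your choice of $s=d(\epsilon-2/n)^2/2$ together with $v\le d$ (which indeed follows from $np_ie^{-np_i}\le 1-e^{-np_i}\le 1-(1-p_i)^n$) is in fact a cleaner way to land exactly on the stated exponent than the paper's own derivation, which first bounds $\Pr[\Dpoin<\dpoi(1-\epsilon)]\le e^{-\dpoi\epsilon^2/2}$ via $v\le\dpoi$ and then shifts $\epsilon\mapsto\epsilon-2/n$ using $d<\dpoi(1+2/n)$.

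The only substantive difference is the upper tail. The paper simply writes ``similar and omitted,'' implicitly relying on a two-sided version of Lemma~\ref{lem:dpoi1} from the cited source. You instead observe that, as stated here, Lemma~\ref{lem:dpoi1} is one-sided, and you supply an independent argument via a Chernoff/Bernstein bound for the sum of independent Bernoulli indicators $\Dpoin=\sum_i\II\{\mu_i>0\}$. That is a sound fix; the factor you worry about in the denominator ($2\dpoi$ versus $2d$) is absorbed by the same $1\pm 2/n$ bookkeeping, so no genuine loss occurs in the exponent.
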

\begin{proof}
The number of non-zero elements in $\tau$ is equal to the number of distinct elements in $x^{poi(n)}$. By Lemma~\ref{lem:dpoi1}
\begin{align*}
\Pr[\Dpoin<\dpoi(1-\epsilon)] &\le e^{-\frac{(\dpoi\epsilon)^2}{2v}}\\
& \stackrel{(a)}{\le} e^{-\frac{\dpoi \epsilon^2}{2}},
\end{align*}
where $(a)$ is because $\dpoi>v$. Lemma~\ref{lem:dpoi2} implies $ \dpoi(1-\frac2n)< d < \dpoi(1+\frac2n) $. Therefore, 
\begin{align*}
\Pr[\Dpoin < d(1-\epsilon)] &\le \Pr[\Dpoin<\dpoi\left(1+\frac2n\right)(1-\epsilon)] \\
&\le e^{-\frac{d (\epsilon-\frac2n)^2}{2}}. 
\end{align*}
Proof of the other part is similar and omitted.
 \end{proof}
Next, we lower bound the number of bits we need to express $\tau^k$ based on the number of nonzero elements in it. 
\begin{Lemma}
\label{lem:boundonq}
 If number of non-zero elements in $\tau^k$ is more than $d'$, then 
 \[
 q(\tau^k) \leq \frac{1}{{k\choose d'}}.
 \]
\end{Lemma}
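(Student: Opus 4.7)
The strategy is to exploit the permutation symmetry of the class $\pd^n$ (if $p \in \pd$ then $p \circ \pi \in \pd$ for every $\pi \in S_k$) to reduce to the case of a coordinate-symmetric coding distribution $q$, and then finish by a simple orbit-counting argument.

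First I would argue that the minimax-optimal $q$ may be taken invariant under permutations of the $k$ coordinates of the type vector. Given any $q$, set
\[
q_{\mathrm{sym}}(\tau^k) \ed \frac{1}{k!} \sum_{\pi \in S_k} q(\pi \cdot \tau^k),
\]
where $\pi$ acts on a type vector by permuting coordinates. Convexity of $D(p \| \cdot)$ in its second argument gives
\[
D(p \| q_{\mathrm{sym}}) \le \frac{1}{k!}\sum_{\pi} D(p \| q \circ \pi),
\]
and the change of variable $y = \pi^{-1}(x)$ yields $D(p \| q \circ \pi) = D(p \circ \pi^{-1} \| q)$. Since $p \circ \pi^{-1}$ ranges over $\pd$ as $p$ does (by permutation-closure), one concludes $\sup_{p \in \pd} D(p \| q_{\mathrm{sym}}) \le \sup_{p \in \pd} D(p \| q)$. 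Hence we may assume $q$ is symmetric.

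Second I would count the orbit of $\tau^k$. If $\tau^k$ has $d''$ non-zero entries, its orbit under $S_k$ contains at least $\binom{k}{d''}$ distinct type vectors, one for each choice of $d''$-element support in $[k]$ (different supports give different zero/non-zero patterns, hence distinct vectors). Symmetry of $q$ forces these orbit elements to carry equal mass; since their masses sum to at most one,
\[
q(\tau^k) \le \frac{1}{\binom{k}{d''}} \le \frac{1}{\binom{k}{d'}},
\]
the last inequality using $d'' > d'$ together with monotonicity of $d \mapsto \binom{k}{d}$ in the relevant regime $d \le k/2$ (which is implied by the small expected-distinct-count setting in which this lemma is applied).

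The principal subtlety is the symmetrization step: one must combine the permutation-closure of $\pd^n$ with convexity of $-\log$ to guarantee that replacing $q$ by $q_{\mathrm{sym}}$ does not increase the worst-case log-loss. Once this is in hand, the remaining orbit count is immediate.
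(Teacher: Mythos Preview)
Your proposal is correct and follows essentially the same approach as the paper: reduce to a permutation-symmetric $q$ (the paper asserts this as ``$q$ should assign same probability to all types with the same profile vector''), then count the orbit of $\tau^k$ under $S_k$ to obtain the $\binom{k}{d''}$ bound. You supply more detail than the paper does---a full symmetrization argument via convexity of $D(\cdot\|\cdot)$ and permutation-closure of $\pd$, and an explicit acknowledgment that the final step $\binom{k}{d''}\ge\binom{k}{d'}$ needs the regime $d''\le k/2$---both of which the paper omits.
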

\begin{proof}
Consider all the type vectors with the same number of non-zero elements as $\tau^k$. It is not hard to see that $q$ should assign same probability to all types with the same profile vector. 
 Number of such type vectors for a given number of non-zero elements $d'$ is at least $k \choose d'$.
\end{proof}
Note that the number of non-zero elements in $\tau^k$ is same as $\Dpoin$. Based on Lemmas~\ref{lem:boundond} and  \ref{lem:boundonq} we have
\begin{align}
\sum_{\tau^k} p(\tau^k) \log \frac{1}{q(\tau^k)} &\ge \sum_{\tau^k: \nz \ge (1-\epsilon)d} p(\tau^k) \log \frac{1}{q(\tau^k)}\nonumber\\
&\ge \sum_{\tau^k: \nz \ge (1-\epsilon)d} p(\tau^k) \log {k\choose d(1-\epsilon)}\nonumber\\
&\ge \left(1- e^{-\frac{d (\epsilon-\frac2n)^2}{2}}\right) \log {k\choose d(1-\epsilon)}\nonumber\\
&= \log {k\choose d}(1+o_d(1))\label{lbound1}.
\end{align}
where the last line is by choosing $\epsilon=d^{-\frac13}$. 
Now we focus on bounding the entropy of the type. Recall that if distribution $p$ is sampled $\iid$ $ poi(n)$ times, then the number of times symbol $i$ appears, $\mu_i$, is an independent Poisson random variable with mean $\lambda_i = np_i$. First we state a useful lemma in calculation of the entropy.
\begin{Lemma}
\label{lem:poientropy}
If $X \sim poi(\lambda)$ for $\lambda<1$, then
\[
H(X) \le \lambda[1-\log \lambda]+e^{-\lambda} \frac{\lambda^2}{1-\lambda}.
\]
\end{Lemma}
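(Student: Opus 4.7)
The plan is to compute the entropy of $X$ directly from the Poisson pmf and then bound the contribution of $\EE[\log X!]$ by a geometric series, using the hypothesis $\lambda<1$.

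First I would expand the entropy. Writing $P(X=k)=e^{-\lambda}\lambda^k/k!$ and taking $-\log$, the usual manipulation gives
\[
H(X)=-\sum_{k=0}^{\infty} \tfrac{e^{-\lambda}\lambda^k}{k!}\bigl(k\log\lambda-\lambda\log e-\log k!\bigr)
=\EE[\log X!]+\lambda\log e-\lambda\log\lambda.
\]
The last two terms already collapse to the $\lambda[1-\log\lambda]$ part of the claim (with the $\log e$ factor absorbed into the $1$, matching the paper's convention in the statement), so the entire task reduces to bounding $\EE[\log X!]$ by $e^{-\lambda}\lambda^2/(1-\lambda)$.

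For that, I would use the crude but perfectly suited inequality $\log k!\le k!$, valid for all $k\ge 2$ since $k!\ge 2$ and $\log n\le n$ for $n\ge 1$. Since $\log 0!=\log 1!=0$, only $k\ge 2$ contribute to $\EE[\log X!]$, and the factorials in the denominator of the pmf cancel against the $k!$ from the bound:
\[
\EE[\log X!]=\sum_{k=2}^{\infty}\tfrac{e^{-\lambda}\lambda^k}{k!}\log k!
\le\sum_{k=2}^{\infty}\tfrac{e^{-\lambda}\lambda^k}{k!}\cdot k!
=e^{-\lambda}\sum_{k=2}^{\infty}\lambda^k
=e^{-\lambda}\frac{\lambda^2}{1-\lambda},
\]
where the last step uses $\lambda<1$ to sum the geometric tail. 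Combining with the first display yields the claimed bound.

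The only real obstacle is picking the right bound on $\log k!$: the clean answer $\lambda^2/(1-\lambda)$ appears exactly because $\log k!\le k!$ cancels the $k!$ in the Poisson pmf and turns the sum into $\sum_{k\ge 2}\lambda^k$. Sharper-looking bounds such as $\log k!\le k\log k$ or Stirling would leave residual $k$-dependence and break the geometric summation; conversely, any weaker bound loses factorial cancellation. The hypothesis $\lambda<1$ is used in exactly one place — convergence of the geometric series — which is why the lemma restricts to that regime.
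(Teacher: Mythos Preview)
Your argument is correct and is essentially identical to the paper's own proof: the same decomposition $H(X)=\lambda-\lambda\log\lambda+\EE[\log X!]$, the same crude bound $\log k!\le k!$ for $k\ge2$ to kill the factorial in the pmf, and the same geometric tail sum $\sum_{k\ge2}\lambda^k=\lambda^2/(1-\lambda)$. Your remark about the $\log e$ being silently absorbed into the ``$1$'' in the statement is on point; the paper makes the same identification without comment.
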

\begin{proof} 
\begin{align*}
H(X)&=-\sum_{i=0}^{\infty} p_i \log p_i \\
&=-\sum_{i=0}^{\infty}e^{-\lambda} \frac{\lambda^i}{i!} \log \frac{e^{-\lambda}\lambda^i}{i!}\\
&=-\sum_{i=0}^{\infty}e^{-\lambda} \frac{\lambda^i}{i!}\left[ \log e^{-\lambda}+i\log \lambda - \log (i!)\right]\\
&=\lambda \sum_{i=0}^{\infty}e^{-\lambda} \frac{\lambda^i}{i!}-\log \lambda\sum_{i=0}^{\infty} ie^{-\lambda} \frac{\lambda^i}{i!}+\sum_{i=0}^{\infty}e^{-\lambda} \frac{\lambda^i}{i!} \log(i!)\\
&\stackrel{(a)}{=}\left[\lambda - \lambda \log \lambda \right]+e^{-\lambda}\left[\sum_{i=2}^{\infty} \frac{\lambda^i \log (i!)}{i!}\right]\\
& \le \left[\lambda - \lambda \log \lambda \right]+e^{-\lambda} \left[\sum_{i=0}^{\infty} \lambda^i \right]\\
&\stackrel{(b)}{=}\lambda[1-\log \lambda]+e^{-\lambda} \frac{\lambda^2}{1-\lambda}
\end{align*}
where $(a)$ is because the first two terms in the last summation is zero and for the rest of the terms, $\log(i!)<i!$ Also $(b)$ follows from geometric sum for $\lambda<1$. 
\end{proof}
We can write
\begin{align*}
H(\tau^k)&=\sum_{i=1}^k H(\mu_i)\\
&= \sum_{i=1}^k H(poi(\lambda_i))\\
&=\sum_{\lambda_i < 0.7} H(poi(\lambda_i)) + \sum_{\lambda_i \ge 0.7} H(poi(\lambda_i))\\
&\stackrel{(a)}{=} \sum_{\lambda_i < 0.7} \left( \lambda_i - \lambda_i \log \lambda_i + e^{-\lambda_i}\frac{\lambda_i^2}{1-\lambda_i}\right)+ \sum_{\lambda_i \ge 0.7} H(poi(\lambda_i))\\
&\stackrel{(b)}{\le}   \sum_{\lambda_i < 0.7} \left( 3\lambda_i - \lambda_i \log \lambda_i \right)+ \sum_{\lambda_i \ge 0.7} \frac12 \log \left(2\pi e (\lambda_i +\frac{1}{12})\right)\numberthis \label{eq:entropy}
\end{align*}
where $(a)$ is due to  Lemma~\ref{lem:poientropy} and (b) is by using Equation $(1)$ in \cite{AS10} and the fact that $e^{-x}\frac{x^2}{1-x}<2x$  for $x<0.7$.

In the rest of this section, we calculate an upper bound for the second term in~\eqref{eq:entropy}.Note that the first term in the same equation, i.e. $\sum_{\lambda_i < 0.7} H(poi(\lambda_i))$ is heavily dependent on the shape of distributions in the class. In other words, upper bounding this term generally, will lead us to a weak lower bound, while plugging in the exact values leads to a matching lower bound for the intended envelope class, i.e. Zipf distributions.

Let $n^{-}$ be the sum of all $\lambda <0.7$ and $n^{+}$ be the sum of all $\lambda \ge 0.7$. Similarly, we define $k^{-}$ and $k^{+}$ as the number of $\lambda < 0.7$ and $\lambda \geq 0.7$ respectively. Therefore we have 
\begin{align*}
k^{+}\left(1-\frac{1}{e^{-0.7}}\right)&=\sum_{i: \lambda_i\ge 0.7} \left(1-\frac{1}{e^{-0.7}}\right)\\
& \le \sum_{i: \lambda_i \ge 0.7} 1-exp(-\lambda_i) \\
&\le \sum_{i} 1-exp(-\lambda_i)\\
& \stackrel{(a)}{\le} d(1+\epsilon)
\end{align*} 
where $(a)$ follows from Lemma~\ref{lem:dpoi1} and the fact $\dpoi=\sum_i 1-exp(-\lambda_i)$.
Hence we have $k^{+}\le \frac{d(1+\frac2n)}{1-\frac{1}{e^{0.7}}}\le 2d(1+\frac2n)$ and consequently  $k^{-}\ge k- \frac{d(1+\frac2n)}{1-\frac{1}{e^{0.7}}}$. For the type entropy we know 
\begin{align}
\sum_{\lambda_i \ge 0.7} \frac12 \log \left( 2 \pi e (\lambda_i+\frac{1}{12})\right) &\stackrel{(a)}{\le} \frac12 k^{+} \log \left( 2 \pi e (\frac{n^{+}}{k^{+}}+\frac{1}{12})\right)\nonumber\\
&\stackrel{(b)}{\le} d(1+\frac2n) \log \left(\pi e (\frac{n}{d(1+\frac2n)}+\frac{1}{6})\right)\nonumber\\
&= \left(d \log (\frac{n}{d}+\frac16)+d \log \pi e\right)(1 + o_d(1)) \label{lbound3}
\end{align}
where $(a)$ is by concavity of logarithm and $(b)$ is by monotonicity.
\begin{Lemma}
\label{lem:distinctlower}
\begin{align*}
\bar{R}(\pd^n) &\geq \left(\log {k\choose d}- d \log \left(\frac{n}{d}+\frac16\right)-d\log \pi e\right)(1+o_d(1)) -\sum_{\lambda_i < 0.7} (3\lambda_i -\lambda_i \log \lambda_i) .
\end{align*}
\end{Lemma}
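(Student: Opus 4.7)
The plan is to reduce the statement to bounding the Poissonized type redundancy, then separately handle a ``code-length'' term and a ``source-entropy'' term. First I would apply Lemma~\ref{lem:lower_poisson} followed by Lemma~\ref{lem:typeredundancy} to pass from $\bar{R}(\pd^n)$ to $\tfrac12 \bar{R}(\tau(\pd^{\poi(n)}))$. Expanding the definition of average redundancy, this equals
\[
\tfrac12\min_q\max_{p\in\pd}\left[\sum_{\tau^k} p(\tau^k)\log\tfrac{1}{q(\tau^k)} - H(\tau^k)\right],
\]
so it suffices to produce a uniform (in $q$) lower bound on the cross term, for a cleverly chosen $p$, and a matching upper bound on $H(\tau^k)$.

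For the cross term I would exploit the permutation symmetry of $\pd^n$: since every permutation of a distribution in $\pd^n$ is again in $\pd^n$, any $q$ must split its mass equally among all profiles with a fixed support size $d'$, giving $q(\tau^k)\le 1/\binom{k}{d'}$ whenever $\tau^k$ has $d'$ non-zero entries. Restricting the sum to the typical event $\{\Dpoin\ge (1-\epsilon)d\}$, whose probability is $1-o_d(1)$ by Lemma~\ref{lem:boundond} (via Lemmas~\ref{lem:dpoi1} and~\ref{lem:dpoi2}), and choosing $\epsilon=d^{-1/3}$ to kill the error, yields the bound $\log\binom{k}{d}(1+o_d(1))$ displayed in~\eqref{lbound1}.

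For the entropy term I would use independence under Poissonization to write $H(\tau^k)=\sum_i H(\poi(\lambda_i))$ with $\lambda_i=np_i$, and split the sum at the threshold $\lambda_i=0.7$. On the large side, apply the standard bound $H(\poi(\lambda))\le\tfrac12\log(2\pi e(\lambda+\tfrac{1}{12}))$, then use concavity of $\log$ together with $n^+\le n$ and the control $k^+\le 2d(1+2/n)$ (which falls out of $k^+(1-e^{-0.7})\le \dpoi$ combined with Lemma~\ref{lem:dpoi2}) to arrive at the bound in~\eqref{lbound3}. On the small side, Lemma~\ref{lem:poientropy} gives the explicit per-term bound $\lambda_i(1-\log\lambda_i)+e^{-\lambda_i}\lambda_i^2/(1-\lambda_i)\le 3\lambda_i-\lambda_i\log\lambda_i$ which is the term kept unsummed in the statement.

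Assembling the three pieces gives the claimed inequality. The main subtlety, and the reason the small-$\lambda$ contribution is left as an explicit sum rather than absorbed, is that this term is genuinely distribution-dependent: a crude generic bound on $\sum_{\lambda_i<0.7}(3\lambda_i-\lambda_i\log\lambda_i)$ would cost a factor of $\log k$ and defeat the whole point of the framework. Keeping it symbolic is deliberate, so that when the lemma is invoked for a Zipf envelope in Section~\ref{sec:zipf} one can compute this sum tightly and obtain the announced $\Theta(n^{1/\alpha}\log k)$ lower bound. The only routine technical point I anticipate is making sure the high-probability restriction in the cross-term bound and the count estimate for $k^+$ both give $(1+o_d(1))$ corrections simultaneously, which forces the single choice $\epsilon=d^{-1/3}$.
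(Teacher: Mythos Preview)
Your proposal is correct and follows the paper's argument essentially step for step: the reduction via Lemmas~\ref{lem:lower_poisson} and~\ref{lem:typeredundancy}, the decomposition into a cross-entropy term and a type-entropy term, the symmetry-based bound $q(\tau^k)\le 1/\binom{k}{d'}$ (Lemma~\ref{lem:boundonq}) combined with the concentration of $\Dpoin$ (Lemma~\ref{lem:boundond}) and the choice $\epsilon=d^{-1/3}$ to obtain~\eqref{lbound1}, and the split of $H(\tau^k)$ at $\lambda_i=0.7$ using Lemma~\ref{lem:poientropy} on the small side and the Gaussian-type bound plus concavity on the large side to obtain~\eqref{lbound3}. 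Your remark about why the small-$\lambda$ sum is left symbolic is exactly the paper's rationale as well.
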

\begin{proof}
~\eqref{lbound1},~\eqref{eq:entropy}, and~\eqref{lbound3} leads to the theorem. 
\end{proof}

\section{Expected redundancy of unordered power-law envelope}
\label{sec:zipf}
To use Lemmas~\ref{lem:distinctupper} and~\ref{lem:distinctlower} we need to bound the number of distinct elements that appear from any distribution in the envelope class $\uenv$ in addition to calculating the last summation in Lemma~\ref{lem:distinctlower}. For a distribution $p \in \uenv$ the number of distinct elements is 
\begin{align*}
\EE[\Dn] 
&= \sum^k_{i=1} \EE[\II_{\mu_i > 0}] \\
&= \sum^k_{i=1} 1 - (1-p_i)^n \\
&= \sum^k_{i=1} 1 - (1-p_{(i)})^n \\
&\leq \sum^k_{i=1} 1 - (1-f(i))^n \\
& \leq \sum_{i:f(i) \geq 1/n} 1 +  \sum_{i:f(i)< 1/n} 1 -(1-f(i))^n \\
& \leq \sum_{i:f(i) \geq 1/n} 1 +  \sum_{i:f(i) < 1/n} nf(i).
\end{align*}
Thus we need to bound the number of elements with envelope $\geq 1/n$ and the sum of envelopes for elements that are less than $1/n$.
For $\uialphaen$, the first term is $\leq (n/c)^{1/\alpha}$ and the second term is 
\begin{align*}
\leq \sum^k_{i =  (n/c)^{1/\alpha}} cn i^{-\alpha} 
& \leq \frac{c}{\alpha-1} n (n/c)^{\frac{1-\alpha}{\alpha}}\\
& \leq \frac{c^2}{\alpha-1} n^{1/\alpha}.
\end{align*}
Combining these, we get
\[
d \le \left(\frac{1}{c^{1/\alpha}} +  \frac{c^2}{\alpha-1}\right) n^{1/\alpha}.
\]
 
For $\uialpha$, we calculate $\sum_{\lambda_i < 0.7} \lambda_i$ and $\sum_{\lambda_i< 0.7} -\lambda_i \log \lambda_i$ for $\alpha > 1$.
In the below calculations ``$\approx$'' means that the quantities are equal up-to a multiplicative factor of $1+o_n(1)$.
\begin{align*}
n^{-}= \sum_{\lambda_i < 0.7} \lambda_i =& \sum_{i=\lfloor (\frac{10n}{7C_{k,\alpha}})^{\frac{1}{\alpha}}\rfloor+1}^k n\frac{i^{-\alpha}}{C_{k,\alpha}} \\
 \approx& n\int_{\left(\frac{10n}{7C_{k,\alpha}}\right)^{1/\alpha}}^k  \frac{i^{-\alpha}}{C_{k,\alpha}} di\\
 \approx& \frac{n}{(\alpha-1)C_{k,\alpha}} 
 \left(\left(\frac{10n}{7C_{k,\alpha}}\right)^{-(\alpha-1)/\alpha} - k^{-(\alpha-1)}\right)\\
 \approx& \frac{n}{(\alpha-1)C_{k,\alpha}} \left(\frac{10n}{7C_{k,\alpha}}\right)^{-(\alpha-1)/\alpha}\\
 =& \frac{7}{10(\alpha-1)}\left(\frac{10n}{7C_{k,\alpha}}\right)^{\frac{1}{\alpha}}
\end{align*}

Now we calculate the other summation. For $\uialpha$, 

\begin{align*}
 \sum_{\lambda_i < 0.7} -\lambda_i \log \lambda_i =& \sum_{i=\lfloor (\frac{10n}{7C_{k,\alpha}})^{\frac{1}{\alpha}}\rfloor+1}^k  n\frac{i^{-\alpha}}{C_{k,\alpha}} \log (\frac{C_{k,\alpha} i^{\alpha}}{n})\\
 =&  n^{-}\log(\frac{C_{k,\alpha}}{n}) + \frac{n\alpha}{C_{k,\alpha}}\sum_{i=\lfloor (\frac{10n}{7C_{k,\alpha}})^{\frac{1}{\alpha}}\rfloor+1}^k i^{-\alpha} \log i\\
 \le & n^{-}\log(\frac{C_{k,\alpha}}{n}) + \frac{n\alpha}{C_{k,\alpha}}\int_{\left(\frac{10n}{7C_{k,\alpha}}\right)^{1/\alpha}-1}^k i^{-\alpha} \log i di\\
 \le&  n^{-}\log(\frac{C_{k,\alpha}}{n}) + \frac{n\alpha}{C_{k,\alpha}}\left[\frac{x^{1-\alpha}((\alpha-1)\log x + 1)}{(\alpha-1)^2}\right]_{k}^{\left(\frac{2n}{C_{k,\alpha}}\right)^{1/\alpha}}+ \frac{n\alpha}{C_{k,\alpha}}\frac{1}{\alpha}\left(\frac{10n}{7C_{k,\alpha}}\right)^{-1} \log \left(\frac{10n}{7C_{k,\alpha}}\right)\\
 \le&n^{-}\log(\frac{C_{k,\alpha}}{n})+ \frac{n}{C_{k,\alpha}} \frac{1}{\alpha-1} (\frac{10n}{7C_{k,\alpha}})^{\frac{1-\alpha}{\alpha}} \log \frac{10n}{7C_{k,\alpha}} +\frac{n\alpha}{C_{k,\alpha}(\alpha-1)^2}\left(\frac{10n}{7C_{k,\alpha}}\right)^{\frac{1}{\alpha}-1}\\
 +&\frac{n}{C_{k,\alpha}}\left(\frac{10n}{7C_{k,\alpha}}\right)^{-1} \log \left(\frac{10n}{7C_{k,\alpha}}\right)\\
\le& \frac{7}{10(\alpha-1)}\left(\frac{10n}{7C_{k,\alpha}}\right)^{\frac{1}{\alpha}} \log (\frac{C_{k,\alpha}}{n})+\frac{7}{10(\alpha-1)} (\frac{10n}{7C_{k,\alpha}})^{\frac{1}{\alpha}} \log \frac{10n}{7C_{k,\alpha}} \\
+& \frac{7\alpha}{10(\alpha-1)^2}(\frac{10n}{7C_{k,\alpha}})^{\frac{1}{\alpha}}+\frac{7}{10} \log \frac{10n}{7C_{k,\alpha}}\\
=& \frac{11.2\alpha-4.2}{10(\alpha-1)^2}\left(\frac{10n}{7C_{k,\alpha}}\right)^{\frac{1}{\alpha}}+\frac{7}{10} \log \frac{10n}{7C_{k,\alpha}}
 \end{align*}

Substituting the above bounds in Lemmas~\ref{lem:distinctupper} and ~\ref{lem:distinctlower} results in the following theorem.
\begin{Theorem}
\label{thm:zipf}
For $k>n$, $c_1= \left(\frac{1}{c^{1/\alpha}} +  \frac{c^2}{\alpha-1}\right)$, $c'_1= \left(C_{k,\alpha}^{1/\alpha} +  \frac{C_{k,\alpha}^{-2}}{\alpha-1}\right)$, and $c_2=\frac{32.2\alpha-25.2}{10(\alpha-1)^2}\left(\frac{10}{7C_{k,\alpha}}\right)-c'_1 \cdot \log \pi e$
\begin{align*}
 \bar{R}(\uialphaen^n) &\ge \bar{R}(\uialpha^n) \ge \left(\log {k \choose c'_1 n^{\frac{1}{\alpha}}}- c'_1(1-\frac{1}{\alpha}) n^{\frac{1}{\alpha}} \log \frac{n}{c'_1}\right)(1+o_n(1)) -c_2 \cdot n^{\frac{1}{\alpha}}-\frac{7}{10}\log \frac{10n}{7C_{k,\alpha}},
\end{align*}
and 
\begin{align*}
\bar{R}(\uialphaen^n) &\le \log {k \choose c_1 n^{\frac{1}{\alpha}}}+ c_1(2-\frac{1}{\alpha}+2\log e) \cdot n^{\frac{1}{\alpha}} \log \frac{n}{c_1}+ \log (n+1).
\end{align*}
We can write both of the bounds above in order notation as 
\[
 \bar{R}(\uialphaen^n)= \Theta (n^{\frac{1}{\alpha}} \log k ).
\]
\end{Theorem}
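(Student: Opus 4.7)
The strategy is to derive both bounds by applying the general distinct-element redundancy bounds of Section~\ref{sec:expected_distinct}—Lemmas~\ref{lem:distinctupper} and~\ref{lem:distinctlower}—to the power-law class, using the estimates of $\EE[\Dn]$ and of the low-rate sums $\sum_{\lambda_i<0.7}\lambda_i$ and $\sum_{\lambda_i<0.7}-\lambda_i\log\lambda_i$ already carried out in the paragraphs preceding the theorem. First I would verify the containment step: every $p \in \uialphaen$ has sorted coordinates bounded by $c\,i^{-\alpha}$, and the computation in the first display of this section then yields $\EE_p[\Dn] \leq c_1 n^{1/\alpha}$, so $\uialphaen^n \subset \cP_{c_1 n^{1/\alpha}}^n$. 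Taking the special case $c = 1/C_{k,\alpha}$ gives the analogous containment $\uialpha^n \subset \cP_{c'_1 n^{1/\alpha}}^n$ with the tighter constant $c'_1$.

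For the upper bound, I would substitute $d = c_1 n^{1/\alpha}$ into Lemma~\ref{lem:distinctupper}. Writing $d\log(kn/d^2) = d\log(k/d) + d\log(n/d)$ and using $\log\binom{k}{d} \geq d\log(k/d)$ converts the bound into $\log\binom{k}{c_1 n^{1/\alpha}} + c_1(1-1/\alpha)n^{1/\alpha}\log(n/c_1) + O(n^{1/\alpha})$; absorbing the additive $(2\log e+1)d$ into the coefficient of $n^{1/\alpha}\log(n/c_1)$ yields exactly the stated upper bound with its $(2-1/\alpha + 2\log e)$ coefficient and the $\log(n+1)$ additive error.

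For the lower bound, I would apply Lemma~\ref{lem:distinctlower} to $\cP_{c'_1 n^{1/\alpha}}^n$, which contains $\uialpha^n$. The leading $\log\binom{k}{c'_1 n^{1/\alpha}}$ term appears directly; the piece $-d\log(n/d + 1/6) - d\log \pi e$ contributes $-c'_1(1-1/\alpha) n^{1/\alpha}\log(n/c'_1)$ up to an $O(n^{1/\alpha})$ correction that gets merged into the constant $c_2$. The residual sum $\sum_{\lambda_i<0.7}(3\lambda_i - \lambda_i\log\lambda_i)$ splits into $3n^{-}$ plus $\sum_{\lambda_i<0.7}-\lambda_i\log\lambda_i$, both already evaluated in this section to be at most $O(n^{1/\alpha})$ and $O(n^{1/\alpha}) + O(\log n)$ respectively; plugging the explicit $\alpha$-dependent constants in produces exactly the $c_2\, n^{1/\alpha}$ and $\tfrac{7}{10}\log(10n/(7C_{k,\alpha}))$ terms in the stated bound.

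Finally, to collapse both bounds to $\Theta(n^{1/\alpha}\log k)$, I would observe that $\log\binom{k}{d} = d\log(k/d) + \Theta(d)$ whenever $d \ll k$. Since $k > n$ and $d = \Theta(n^{1/\alpha})$ with $\alpha > 1$, one has $\log(k/d) \geq (1-1/\alpha)\log n$ and simultaneously $\log(k/d) = \Theta(\log k)$, so the binomial contribution dominates all the $O(n^{1/\alpha}\log n)$ and $O(n^{1/\alpha})$ correction terms from either side. \emph{The main obstacle} I foresee is precisely this matching: keeping every $(1+o_n(1))$ factor and every linear-in-$d$ or $\log n$ correction below the $\log k$ factor of the leading term, so that the upper and lower bounds agree up to constants. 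The assumption $k > n$ in the statement is exactly what makes this reconciliation possible; without it, $\log(k/d)$ could be of the same order as the correction terms and the two bounds would no longer be tight in order notation.
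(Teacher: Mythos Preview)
Your overall strategy matches the paper's: both bounds are obtained by feeding the Section~\ref{sec:zipf} estimates of $d$, $n^-$, and $\sum_{\lambda_i<0.7}(-\lambda_i\log\lambda_i)$ into Lemmas~\ref{lem:distinctupper} and~\ref{lem:distinctlower}, and then simplifying. The upper-bound half and the final $\Theta$ reconciliation are fine.

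There is, however, a genuine logical slip in your lower-bound step. You write that you would ``apply Lemma~\ref{lem:distinctlower} to $\cP_{c'_1 n^{1/\alpha}}^n$, which contains $\uialpha^n$.'' But containment points the wrong way for a lower bound: $\uialpha^n \subset \cP_{c'_1 n^{1/\alpha}}^n$ only yields $\bar{R}(\uialpha^n) \le \bar{R}(\cP_{c'_1 n^{1/\alpha}}^n)$, so a lower bound on the larger class says nothing about the smaller one. What actually makes the argument work is that the \emph{derivation} of Lemma~\ref{lem:distinctlower} (specifically Lemma~\ref{lem:boundonq} and the bound~\eqref{lbound1}) uses only two facts: the class is closed under permutations, so the optimal $q$ is symmetric over types with the same profile; and the class contains a fixed $p$ whose $\EE_p[\Dn]$ equals the $d$ appearing in the bound. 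Both hold for $\uialpha^n$ itself with $p=\zipf{\alpha}{k}$, so the lower bound applies directly to $\uialpha^n$ without any containment step. Relatedly, because $\log\binom{k}{d}$ is increasing in $d$, the $\Theta$ conclusion requires a \emph{lower} estimate $d\ge c''n^{1/\alpha}$ for the Zipf distribution, not just the upper estimate $d\le c'_1 n^{1/\alpha}$ you invoke; this is easy (the first $\Theta(n^{1/\alpha})$ symbols each have probability at least $1/n$), but it is a necessary ingredient your plan omits.
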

\bibliographystyle{abbrvnat}
\bibliography{isit.bib}

\begin{thebibliography}{34}
\providecommand{\natexlab}[1]{#1}
\providecommand{\url}[1]{\texttt{#1}}
\expandafter\ifx\csname urlstyle\endcsname\relax
  \providecommand{\doi}[1]{doi: #1}\else
  \providecommand{\doi}{doi: \begingroup \urlstyle{rm}\Url}\fi

\bibitem[Acharya et~al.(2012)Acharya, Das, and Orlitsky]{ADO12}
J.~Acharya, H.~Das, and A.~Orlitsky.
\newblock Tight bounds on profile redundancy and distinguishability.
\newblock In \emph{NIPS}, 2012.

\bibitem[Acharya et~al.(2013)Acharya, Das, Jafarpour, Orlitsky, and
  Suresh]{acharya2013tight}
J.~Acharya, H.~Das, A.~Jafarpour, A.~Orlitsky, and A.~T. Suresh.
\newblock Tight bounds for universal compression of large alphabets.
\newblock In \emph{ISIT}, pages 2875--2879, 2013.

\bibitem[Acharya et~al.(2014{\natexlab{a}})Acharya, Jafarpour, Orlitsky, and
  Suresh]{Acharya14mono}
J.~Acharya, A.~Jafarpour, A.~Orlitsky, and A.~T. Suresh.
\newblock Efficient compression of monotone and m-modal distributions.
\newblock In \emph{Proceedings of IEEE Symposium on Information Theory},
  2014{\natexlab{a}}.

\bibitem[Acharya et~al.(2014{\natexlab{b}})Acharya, Jafarpour, Orlitsky, and
  Suresh]{acharya2014universal}
J.~Acharya, A.~Jafarpour, A.~Orlitsky, and A.~T. Suresh.
\newblock Universal compression of envelope classes: Tight characterization via
  poisson sampling.
\newblock \emph{arXiv preprint arXiv:1405.7460}, 2014{\natexlab{b}}.

\bibitem[Adamic and Huberman(2002)]{adamic2002zipf}
L.~A. Adamic and B.~A. Huberman.
\newblock Zipf's law and the internet.
\newblock \emph{Glottometrics}, 3\penalty0 (1):\penalty0 143--150, 2002.

\bibitem[Adell et~al.(2010)Adell, Lekuona, and Yu]{AS10}
J.~A. Adell, A.~Lekuona, and Y.~Yu.
\newblock Sharp bounds on the entropy of the poisson law and related
  quantities.
\newblock \emph{Information Theory, IEEE Transactions on}, 56\penalty0
  (5):\penalty0 2299--2306, 2010.

\bibitem[Ben-Hamou et~al.(2014)Ben-Hamou, Boucheron, and
  Ohannessian]{BenBO2014}
A.~Ben-Hamou, S.~Boucheron, and M.~I. Ohannessian.
\newblock Concentration inequalities in the infinite urn scheme for occupancy
  counts and the missing mass, with applications.
\newblock \emph{arXiv preprint arXiv:1412.8652}, 2014.

\bibitem[Boucheron et~al.(2009)Boucheron, Garivier, and Gassiat]{BGG09}
S.~Boucheron, A.~Garivier, and E.~Gassiat.
\newblock Coding on countably infinite alphabets.
\newblock \emph{IEEE Transactions on Information Theory}, 55\penalty0
  (1):\penalty0 358--373, 2009.

\bibitem[Boucheron et~al.(2014)Boucheron, Gassiat, and Ohannessian]{BMesrob14}
S.~Boucheron, E.~Gassiat, and M.~Ohannessian.
\newblock About adaptive coding on countable alphabets: Max-stable envelope
  classes.
\newblock \emph{CoRR, abs/1402.6305}, 2014.

\bibitem[Chen and Goodman(1996)]{CG96}
S.~Chen and J.~Goodman.
\newblock An empirical study of smoothing techniques for language modeling.
\newblock In \emph{Proc. of the Thirty-Fourth Annual Meeting of the Association
  for Computational Linguistics}, pages 310--318, 1996.

\bibitem[Cover(1991)]{Cov91}
T.~Cover.
\newblock Universal portfolios.
\newblock \emph{Mathematical Finance}, 1\penalty0 (1):\penalty0 1--29, January
  1991.

\bibitem[Davisson(1973)]{Dav73}
L.~Davisson.
\newblock Universal noiseless coding.
\newblock \emph{IEEE Transactions on Information Theory}, 19\penalty0
  (6):\penalty0 783--795, Nov. 1973.

\bibitem[Davisson et~al.(1981)Davisson, McEliece, Pursley, and Wallace]{DMPW81}
L.~D. Davisson, R.~J. McEliece, M.~B. Pursley, and M.~S. Wallace.
\newblock Efficient universal noiseless source codes.
\newblock \emph{IEEE Transactions on Information Theory}, 27\penalty0
  (3):\penalty0 269--279, 1981.

\bibitem[Garivier(2009)]{G09}
A.~Garivier.
\newblock A lower-bound for the maximin redundancy in pattern coding.
\newblock \emph{Entropy}, 11\penalty0 (4):\penalty0 634--642, 2009.

\bibitem[Gnedin et~al.(2007)Gnedin, Hansen, Pitman, et~al.]{GnedinHP2007}
A.~Gnedin, B.~Hansen, J.~Pitman, et~al.
\newblock Notes on the occupancy problem with infinitely many boxes: general
  asymptotics and power laws.
\newblock \emph{Probab. Surv}, 4\penalty0 (146-171):\penalty0 88, 2007.

\bibitem[Kieffer(1978)]{Kie78}
J.~Kieffer.
\newblock A unified approach to weak universal source coding.
\newblock \emph{IEEE Transactions on Information Theory}, 24\penalty0
  (6):\penalty0 674--682, Nov. 1978.

\bibitem[Krichevsky and Trofimov(1981)]{KT81}
R.~Krichevsky and V.~Trofimov.
\newblock The preformance of universal coding.
\newblock \emph{IEEE Transactions on Information Theory}, 27\penalty0
  (2):\penalty0 199--207, Mar. 1981.

\bibitem[Mitzenmacher and Upfal(2005)]{MU05}
M.~Mitzenmacher and E.~Upfal.
\newblock \emph{Probability and computing - randomized algorithms and
  probabilistic analysis}.
\newblock Cambridge Univ. Press, 2005.
\newblock ISBN 978-0-521-83540-4.

\bibitem[Orlitsky and Santhanam(2004)]{OS03:soi}
A.~Orlitsky and N.~Santhanam.
\newblock Speaking of infinity.
\newblock \emph{IEEE Transactions on Information Theory}, To appear, 2004.

\bibitem[Orlitsky et~al.(2004)Orlitsky, Santhanam, and Zhang]{OSZ03}
A.~Orlitsky, N.~Santhanam, and J.~Zhang.
\newblock Universal compression of memoryless sources over unknown alphabets.
\newblock \emph{IEEE Transactions on Information Theory}, 50\penalty0
  (7):\penalty0 1469-- 1481, July 2004.

\bibitem[Pareto(1896)]{pareto1896}
V.~Pareto.
\newblock Cours d' \'economie politique, reprinted as a volume of oeuvres
  completes.
\newblock \emph{Droz, Geneva}, 1965, 1896.

\bibitem[Rissanen(1996)]{Ris96}
J.~Rissanen.
\newblock Fisher information and stochastic complexity.
\newblock \emph{IEEE Transactions on Information Theory}, 42\penalty0
  (1):\penalty0 40--47, January 1996.

\bibitem[Shamir(2004)]{Sha04}
G.~Shamir.
\newblock A new upper bound on the redundancy of unknown alphabets.
\newblock In \emph{CISS, Princeton}, 2004.

\bibitem[Shamir(2006)]{Sha06}
G.~Shamir.
\newblock Universal lossless compression with unknown alphabets---the average
  case.
\newblock \emph{IEEE Transactions on Information Theory}, 52\penalty0
  (11):\penalty0 4915--4944, Nov. 2006.

\bibitem[Shamir(2013)]{Sha13}
G.~I. Shamir.
\newblock Universal source coding for monotonic and fast decaying monotonic
  distributions.
\newblock \emph{IEEE Transactions on Information Theory}, 59\penalty0
  (11):\penalty0 7194--7211, 2013.

\bibitem[Shtarkov(1987)]{shtar1987universal}
Y.~M. Shtarkov.
\newblock Universal sequential coding of single messages.
\newblock \emph{Problemy Peredachi Informatsii}, 23\penalty0 (3):\penalty0
  3--17, 1987.

\bibitem[Szpankowski(1998)]{Szp98}
W.~Szpankowski.
\newblock On asymptotics of certain recurrences arising in universal coding.
\newblock \emph{Problems of Information Transmission}, 34\penalty0
  (2):\penalty0 142--146, 1998.

\bibitem[Szpankowski and Weinberger(2010)]{SW10}
W.~Szpankowski and M.~J. Weinberger.
\newblock Minimax redundancy for large alphabets.
\newblock In \emph{ISIT}, pages 1488--1492, 2010.

\bibitem[Szpankowski and Weinberger(2012)]{szpankowski2012minimax}
W.~Szpankowski and M.~J. Weinberger.
\newblock Minimax pointwise redundancy for memoryless models over large
  alphabets.
\newblock \emph{Information Theory, IEEE Transactions on}, 58\penalty0
  (7):\penalty0 4094--4104, 2012.

\bibitem[Willems et~al.(1995)Willems, Shtarkov, and Tjalkens]{WST95}
F.~M.~J. Willems, Y.~M. Shtarkov, and T.~J. Tjalkens.
\newblock The context-tree weighting method: basic properties.
\newblock \emph{IEEE Transactions on Information Theory}, 41\penalty0
  (3):\penalty0 653--664, 1995.

\bibitem[Xie and Barron(2000)]{XB00}
Q.~Xie and A.~R. Barron.
\newblock Asymptotic minimax regret for data compression, gambling and
  prediction.
\newblock \emph{IEEE Transactions on Information Theory}, 46\penalty0
  (2):\penalty0 431--445, 2000.

\bibitem[Yang and Barron(2013)]{yang2013large}
X.~Yang and A.~Barron.
\newblock Large alphabet coding and prediction through poissonization and
  tilting.
\newblock In \emph{The Sixth Workshop on Information Theoretic Methods in
  Science and Engineering, Tokyo}, 2013.

\bibitem[Zipf(1932)]{zipf'32}
G.~K. Zipf.
\newblock Selected studies of the principle of relative frequency in language.
\newblock 1932.

\bibitem[Zipf(1949)]{zipf'49}
G.~K. Zipf.
\newblock Human behavior and the principle of least effort.
\newblock 1949.

\end{thebibliography}

\end{document}